\theoremstyle{plain}
\newtheorem{theorem}{Theorem}[section]
\newtheorem{corollary}[theorem]{Corollary}
\newtheorem{lemma}[theorem]{Lemma}
\newtheorem{proposition}[theorem]{Proposition}
\theoremstyle{definition}
\newtheorem{definition}[theorem]{Definition}
\newtheorem{remark}[theorem]{Remark}
\newtheorem{example}[theorem]{Example}
\newtheorem{question}[theorem]{Question}
\numberwithin{equation}{section}
\newcommand{\R}{\mathbb{R}}
\newcommand{\C}{\mathbb{C}}
\newcommand{\G}{\mathcal{G}}
\begin{document}

\title{The Model-Specific Markov Embedding Problem for Symmetric Group-Based Models}

\author{Muhammad Ardiyansyah,
        Dimitra Kosta, and
        Kaie Kubjas}

\maketitle

\begin{abstract}
\noindent
We study model embeddability, which is a variation of the famous embedding problem in probability theory, when apart from the requirement that the Markov matrix is the matrix exponential of a rate matrix, we additionally ask that the rate matrix follows the model structure. We provide
a characterisation of model embeddable Markov
matrices corresponding to symmetric group-based phylogenetic
models. In particular, we provide necessary and
sufficient conditions in terms of the eigenvalues of symmetric group-based matrices. To showcase
our main result on model embeddability,
we provide an application to hachimoji models, which are eight-state models for synthetic DNA. Moreover, our main result on model embeddability enables us to
compute the volume of the set of model embeddable Markov matrices relative to the volume of other relevant sets of Markov matrices within the model.
\end{abstract}

\section{Introduction}
The embedding problem for stochastic matrices, also known as Markov matrices, deals with the question of deciding whether a stochastic matrix $M$ is the matrix exponential of a rate matrix $Q$. A rate matrix, also known as a Markov generator, has non-negative non-diagonal entries and row sums equal to zero. If a stochastic matrix satisfies such a property and can be expressed as a matrix exponential of a rate matrix, namely $M= e^{Qt}$, then $M$ is called embeddable. Applications of the embeddability property vary from biology and nucleotide substitution models \cite{verbyla2013embedding} to mathematical finance \cite{Israel_etal}. For a first formulation of the embedding problem see \cite{Elfving}. An account of embeddable Markov matrices is provided in \cite{Davies}. The embedding problem for $2 \times 2$ matrices is due to Kendall and first published by Kingman \cite{Kingman},  for $3 \times 3$ matrices is fully settled in a series of papers  \cite{Carette,ChenChen,Cuthbert,Israel_etal,Johansen}, while for $4 \times 4$ stochastic matrices has been recently solved in \cite{casanellas2020embedding}. In general, when the size $n$ of the stochastic matrix is greater than 4, the work \cite{casanellas2020embedding} establishes a criterion for deciding whether a generic $n\times n$ Markov matrix with distinct eigenvalues is embeddable and proposes an algorithm that lists all its Markov generators. In the present paper we study a refinement of the classical embedding problem, called the model embedding problem for a class of $n \times n$ stochastic matrices coming from phylogenetic models.

Phylogenetics is the field that aims at reconstructing the history of evolution of species.  A phylogenetic model is a mathematical model used to understand the evolutionary process given genetic data sets. The most popular phylogenetic models are nucleotide substitution models which use aligned DNA sequence data to study the molecular evolution of DNA. A comprehensive treatment of phylogenetic methods is given by Felsenstein, who is considered the initiator of statistical phylogenetics, in his seminal book \cite{Felsenstein}. Algebraic and geometric methods have been employed with great success in the study of phylogenetic models leading to an explosion of related research work and the establishment of the field of phylogenetic algebraic geometry, also known as algebraic phylogenetics; see \cite{AllmanRhodes03,banos2016phylogenetic,MartaJesus07,CavenderFelsenstein,
gross2018distinguishing,EvansSpeed,Lake,PachterSturmfels,SturmfelsSullivant} for a non-exhaustive list of publications.

To build such a phylogenetic model, we first require a phylogenetic tree $T$, which is a directed acyclic graph comprising of vertices and edges representing the evolutionary relationships of a group of species.
The vertices with valency 1 are called the leaves of the tree.

The tree is considered rooted and the direction of evolution is from the root towards the leaves. On each vertex of the tree $T$, we associate a random variable with $k$ possible states, where in phylogenetics $k$ is often taken to be 2, for the binary states $\{0,1\}$,
or 4, to represent the four types of DNA nucleotides $\{ \texttt{A, C, G, T} \}$. We also require a transition matrix (also known as a mutation matrix) $M^{(e)}$ corresponding to each edge $e$ of the tree, where the entries of this $k\times k$ matrix $M^{(e)}$ represent the probabilities of transition between states. In a phylogenetic tree, the leaves correspond to extant species and so the random variables at the leaves are observed, while the interior vertices correspond to possibly extinct species and so the random variables at the interior vertices are hidden.

In this paper we are focusing on symmetric group-based substitution models.  Substitution models are a class of phylogenetic models which make the fundamental assumption that sites evolve independently according to a continuous-time Markov process and for which the transition matrices are stochastic matrices of the form $M^{(e)}=\exp\left(t_eQ^{(e)}\right)$. Group-based models are a special class of substitution models, in which the matrices $Q^{(e)}$ can be pairwise distinct, but they can all be simultaneously diagonalizable by a linear change of coordinates given by the discrete Fourier transform of an abelian group, also called a commutative group. For example, the Cavender-Farris-Neyman (CFN) model \cite{Cavender,Farris,Neyman}, as well as the Jukes-Cantor (JC) \cite{JukesCantor},  the Kimura-2 parameter (K2P) \cite{Kimura80} and the Kimura-3 parameter (K3P) \cite{Kimura81} models for DNA are all group-based phylogenetic models. In \cite{SturmfelsSullivant}, it is established that through the discrete Fourier transform group-based models correspond to toric varieties, which are geometric objects with nice combinatorial properties. We are interested in symmetric group-based substitution models. Namely, apart from distinct and simultaneously diagonalizable, the transition matrices $Q^{(e)}$ are also symmetric square matrices.   The symmetricity assumption guarantees that the eigenvalues of rate and transition matrices of a group-based model are real, a property that we use in the proof of our main theorem.
Symmetric models are a subset of a special class of models called time-reversible models, where the Markov process appears identical when moving forward or backward in time.   Our results apply to group-based models following an ergodic time-reversible Markov process, as in this case the rate matrices $Q$ are symmetric according to \cite[Lemma 17.2]{PachterSturmfels}.

 The classical embedding problem is concerned with finding a rate matrix $Q$, given a square stochastic matrix $M$.  A variant of the embedding problem that asks for a reversible Markov generator for a stochastic matrix is studied in \cite{Jia}. When we impose the assumption that the rate matrix $Q$ follows the corresponding model conditions, we arrive at a different refined notion of embeddability called model embeddability. The embeddability of circulant and equal-input stochastic matrices is studied in \cite{Baake}.  In the current paper, we focus on the $(\mathcal{G},L)$-embeddability for $n \times n$ matrices corresponding to symmetric group-based substitution models. The $(\mathcal{G},L)$-embeddability means that we require that the rate matrices $Q$ preserve the symmetric group-based structure imposed by the abelian group $\mathcal{G}$ and the symmetric labelling $L$, which we define at the beginning of Section~\ref{section:preliminaries}. Model embeddability for symmetric group-based models is relevant both for homogeneous and inhomogeneous time-continuous processes, as group-based models are Lie Markov models, and hence multiplicatively closed~\cite{sumner2012lie,verbyla2013embedding}. A study of the set of embeddable and model-embeddable matrices corresponding to the Jukes-Cantor, Kimura-2 and Kimura-3 DNA substitution models, which are all symmetric group-based models, is undertaken in \cite{casanellas2020embeddability} and \cite{RocaFernandez}.  In particular, a full characterisation of the set of embeddable $4\times 4$ Kimura 2-parameter matrices is provided in \cite{casanellas2020embeddability}, which together with the results of \cite{RocaFernandez} fully solve the embedding problem for the Kimura 3-parameter model as well.  Although model embeddability, which is a refined notion of embeddability imposed by the model structure, implies classical embeddability, the converse is generally not true (see also \cite[Example 3.1]{RocaFernandez}).

The main result of this paper is a characterization of $(\mathcal{G},L)$-embeddability for any abelian group $\mathcal{G}$ equipped with a symmetric $\mathcal{G}$-labeling function $L$ in~Theorem \ref{theorem:embeddable_matrices}. We provide necessary and sufficient conditions which the eigenvalues of the stochastic matrix of the model need to satisfy for the matrix to be $(\mathcal{G},L)$-embeddable.  To showcase our result, we first introduce three group-based models with the underlying group $\mathbb{Z}_2 \times \mathbb{Z}_2 \times \mathbb{Z}_2$, based on the hachimoji DNA system introduced in~\cite{Hoshika}. Hachimoji, a Japanese word meaning ``eight letters'', is used to describe a
synthetic analog of the nucleic acid DNA, where we have
the four natural nucleobases $\{\texttt{A,C,G,T}\}$ and furthermore an additional
four synthetic nucleotides $\{\texttt{P,Z,B,S}\}$. We then apply Theorem~\ref{theorem:embeddable_matrices} to characterise the model embeddability for the three hachimoji DNA models. The three models are called hachimoji 7-parameter, hachimoji 3-parameter and hachimoji 1-parameter models, which can be thought of as generalisations of the Kimura 3, Kimura 2 and Jukes-Cantor models respectively. Finally, the characterisation of model embeddability in terms of eigenvalues enables us to compute the volume of the
$(\mathcal{G},L)$-embeddable Markov matrices and compare this volume with volumes of other relevant sets of Markov matrices. For the general Jukes-Cantor model, which includes the hachimoji 1-parameter model, the volumes can be derived exactly; for the hachimoji 3-parameter model and for the hachimoji 7-parameter model symbolically and numerically.

The outline of the paper is the following. Section~\ref{section:preliminaries} gives a mathematical background covering notions such as the labeling functions, group-based models and the discrete Fourier transform. Section~\ref{Section:labelling} introduces symmetric $\mathcal{G}$-compatible labelings which is a class of labeling functions with particularly nice properties. Section~\ref{Section:embeddability} presents the main result of this paper about the model embedding problem for symmetric group-based models equipped with a certain labeling function. Then in Section~\ref{section:Hachimoji} we focus on the hachimoji DNA and provide exact characterisation of the model embeddability in terms of eigenvalues of the Markov matrix for the hachimoji 7-parameter, the hachimoji 3-parameter and the hachimoji 1-parameter models. Finally, Section~\ref{section:volume} presents results on the volume of stochastic matrices that are $(\mathcal{G},L)$-embeddable for the three hachimoji group-based models. The code for the computations in this paper is available at
  \url{https://github.com/ardiyam1/Model-Embeddability-for-Symmetric-Group-Based-Models}.

\section{Preliminaries}\label{section:preliminaries}

In this section, we give background on group-based models and the discrete Fourier transform.

\begin{definition}
   Let $\G$ be a finite additive abelian group and $\mathcal{L}$ a finite set. A \textit{labeling function} is any function $L:\mathcal{G}\rightarrow \mathcal{L}$.
\end{definition}

In the group-based model with underlying finite abelian group $\mathcal{G}$, states are in bijection with the elements of the group $\mathcal{G}$. Fundamental in the definition of a group-based model associated to a finite additive abelian group $\G$ and a labeling function $L$ is that the rate of mutation from a state $g$ to state $h$ depends only on $L(h-g)$: That is, the entries of a rate matrix $Q$ are
$$
Q_{g,h}=\psi(h-g)
$$
for a vector $\psi \in \mathbb{R}^{\G}$ satisfying $\sum_{g \in \G} \psi(g)=0$, $\psi(g) \geq 0$ for all non-zero $g \in \G$ and $\psi(g)=\psi(h)$, whenever $L(g)=L(h)$. We say that such $Q$ is a $(\G,L)$-rate matrix. In this paper, the rate matrices in group-based models are assumed to be symmetric, i.e., $\psi(-g)=\psi(g)$ for every $g\in \mathcal{G}$. Since the matrix exponential of a symmetric matrix is again symmetric, then the entries of a transition matrix $P=\exp(Q)$ are
$$
P_{g,h}=f(h-g)
$$
for a vector $f \in \mathbb{R}^{\G}$ satisfying $\sum f(g)=1$, $f(g) \geq 0$ for all $g \in \G$ and $f(g)=f(-g)$ for all $g \in \G$.   As we see in Example~\ref{example:group_structure_not_preserved}, in general it is not true that $f(g)=f(h)$ whenever $L(g)=L(h)$. In Section~\ref{Section:labelling}, we introduce $\G$-compatible labeling functions which guarantee this property and then we say that $P$ is a $(\G,L)$-Markov matrix.

\begin{example} \label{example:labeling-functions-Z2xZ2}{\bf}{\rm}
Let $\mathcal{G}=\mathbb{Z}_2\times\mathbb{Z}_2$ and $\mathcal{L}=\{0,1,2,3\}$. The Kimura 3-parameter, the Kimura 2-parameter  and the Jukes-Cantor models correspond to the following labeling functions
\begin{align*}
L((0,0))=0, L((0,1))=1, L((1,0))=2, L((1,1))=3,\\
L((0,0))=0, L((0,1))=L((1,0))=1, L((1,1))=2,\\
L((0,0))=0, L((0,1))=L((1,0))=L((1,1))=1,
\end{align*}
respectively. The Kimura 3-parameter rate and transition matrices have the form
\begin{equation} \label{eqn:K3P-mutation-matrix}
\begin{pmatrix}
 		a&b&c&d\\b&a&d&c\\c&d&a&b\\d&c&b&a\\
 	\end{pmatrix}.
\end{equation}
In the case of the Kimura 2-parameter model additionally $b=c$ and in the case of the Jukes-Cantor model $b=c=d$.
 \end{example}

\begin{example} \label{example:group_structure_not_preserved}
Let $\G=\mathbb{Z}_7$ and $L$ be a labeling function such that $L(1)=L(2)=L(5)=L(6)$ and $L(3)=L(4)$. Consider the $(\G,L)$-rate matrix $$Q=\begin{pmatrix}
-1& 0.125& 0.125& 0.25& 0.25& 0.125& 0.125\\
  0.125& -1& 0.125& 0.125& 0.25& 0.25& 0.125\\
  0.125& 0.125& -1& 0.125& 0.125& 0.25& 0.25\\
  0.25& 0.125& 0.125& -1& 0.125& 0.125& 0.25\\
  0.25& 0.25& 0.125& 0.125& -1& 0.125& 0.125\\
  0.125& 0.25& 0.25& 0.125& 0.125& -1& 0.125\\
  0.125& 0.125& 0.25& 0.25& 0.125& 0.125&-1\\\end{pmatrix}.$$
In this rate matrix, $\psi(1)=\psi(2)=\psi(5)=\psi(6)=0.125$ and $\psi(3)=\psi(4)=0.25$. By direct computation, we get
    $$P=e^Q=
    \begin{pmatrix}
    0.41305& 0.0858551& 0.0834148& 0.124205& 0.124205& 0.0834148&0.0858551\\
  0.0858551& 0.41305& 0.0858551& 0.0834148& 0.124205&
  0.124205& 0.0834148\\
  0.0834148& 0.0858551& 0.41305& 0.0858551&
  0.0834148& 0.124205& 0.124205&\\
  0.124205& 0.0834148& 0.0858551&
  0.41305& 0.0858551& 0.0834148& 0.124205\\
  0.124205& 0.124205& 0.0834148& 0.0858551& 0.41305& 0.0858551& 0.0834148\\
  0.0834148&0.124205& 0.124205& 0.0834148& 0.0858551& 0.41305& 0.0858551&\\
  0.0858551& 0.0834148& 0.124205& 0.124205& 0.0834148&
  0.0858551& 0.41305\\
    \end{pmatrix}.$$
 The matrix $P$ is not a $(\G,L)$-Markov matrix, since $0.0858551=f(1)=f(6)\neq f(2)=f(5)=0.0834148$. The entries of $P$ induce a labeling function $L'$ such that $L'(1)=L'(6)\neq L'(2)=L'(5)$ and $L'(3)=L'(4).$ In this case, the matrix $P$ is a $(\G,L')$-Markov matrix.
\end{example}

Example~\ref{example:group_structure_not_preserved} shows that the matrix exponential does not necessarily preserve the labeling function associated to a rate matrix. Conversely,  Example 3.1 of \cite{RocaFernandez} suggests that a Kimura 3-parameter Markov matrix can be embeddable, despite the fact that it does not have any Markov generator satisfying Kimura 3-parameter model constraints.

Let $\mathbb{C}^*$ denote the multiplicative group of complex numbers without zero. A group homomorphism from $\G$ to $\mathbb{C}^*$ is called a \emph{character} of $\G$. The characters of $\G$ form a group under multiplication, called the \emph{character group} of $\G$ and denoted by $\widehat{\G}$. Here the product of two characters $\chi_1,\chi_2$ of the group $\mathcal{G}$ is defined by $(\chi_1\chi_2)(g)=\chi_1(g)\chi_2(g)$ for every $g\in\mathcal{G}$. The character group $\widehat{\G}$ is isomorphic to $\G$. Given a group isomorphism between $\G$ and $\widehat{\G}$, we will denote by $\widehat{g}\in \widehat{\G}$ the image of $g\in \G$.

\begin{lemma}[\cite{pachter2005algebraic}, Lemma 17.1] \label{lemma:properties-of-characters}
Let $g,h \in \G$ and $k \in \mathbb{Z}$. Then $\widehat{g}(-h)=\overline{\widehat{g}(h)}$ and $\widehat{kg}(h)=\widehat{g}(kh)$, where $\overline{a}$ denotes the complex conjugate of $a \in \C$.
\end{lemma}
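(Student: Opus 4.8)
The plan is to derive both identities purely from two homomorphism properties: that each character $\widehat{g}$ is a group homomorphism $\G \to \C^*$, and that the chosen isomorphism $g \mapsto \widehat{g}$ is itself a group homomorphism $\G \to \widehat{\G}$, so that $\widehat{g_1 + g_2} = \widehat{g_1}\,\widehat{g_2}$ as characters. Finiteness of $\G$ will supply the extra fact that every value $\widehat{g}(h)$ is a root of unity, which is precisely what converts inversion into complex conjugation.

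For the first identity, I would begin by noting that since $\G$ is finite, $h$ has finite order $n$, so $nh = 0$ and hence $\widehat{g}(h)^n = \widehat{g}(nh) = \widehat{g}(0) = 1$. Thus $\widehat{g}(h)$ is an $n$-th root of unity and in particular $|\widehat{g}(h)| = 1$. Using the homomorphism property $\widehat{g}(-h) = \widehat{g}(h)^{-1}$ together with the elementary fact that $z^{-1} = \overline{z}$ whenever $|z| = 1$, we obtain $\widehat{g}(-h) = \overline{\widehat{g}(h)}$.

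For the second identity, the strategy is to show that both sides equal $\widehat{g}(h)^k$. On one hand, since $\widehat{g} : \G \to \C^*$ is a homomorphism, $\widehat{g}(kh) = \widehat{g}(h)^k$ for every integer $k$. On the other hand, since $g \mapsto \widehat{g}$ is a homomorphism, $\widehat{kg} = (\widehat{g})^k$ as elements of $\widehat{\G}$, and evaluating at $h$ gives $\widehat{kg}(h) = \widehat{g}(h)^k$. Comparing the two yields $\widehat{kg}(h) = \widehat{g}(h)^k = \widehat{g}(kh)$. The cases $k = 0$ and $k < 0$ require only that the trivial character sends every element to $1$ and that inverses in $\C^*$ match those guaranteed by the first identity, both of which follow from the finite-order argument above.

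There is no substantial obstacle here; the lemma is essentially bookkeeping with the defining homomorphism properties. The only point demanding care is being explicit about which structure supplies each step—finiteness for the root-of-unity claim, the character homomorphism for varying the second argument, and the isomorphism $g \mapsto \widehat{g}$ for varying the first—and verifying that the two exponentiations are consistent across all integers $k$, including negative values.
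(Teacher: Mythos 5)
Your proof is correct. Note that the paper itself gives no argument for this lemma---it is quoted from Pachter--Sturmfels (Lemma 17.1) as a known fact---so there is no in-paper proof to compare against; your argument (finiteness of $\G$ forces $|\widehat{g}(h)|=1$ so that inversion is conjugation, and the bilinearity of the pairing $(g,h)\mapsto\widehat{g}(h)$ coming from the two homomorphism properties gives $\widehat{kg}(h)=\widehat{g}(h)^k=\widehat{g}(kh)$) is the standard one and fills that gap cleanly, including the correct observation that the second identity holds for any fixed homomorphism $g\mapsto\widehat{g}$, not just the particular isomorphism chosen.
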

It follows from Lemma~\ref{lemma:properties-of-characters} that the values of characters are roots of unity.

Given a function $a: \G \rightarrow \mathbb{C}$, its \emph{discrete Fourier transform} is a function $\check{a}: \G \rightarrow \mathbb{C}$ defined by
$$
\check{a}(g):=\sum_{h \in \G} \widehat{g}(h)a(h).
$$
\begin{lemma}[\cite{Matsen}, Section~2]\label{remark:real-valued-DFT}
For any real-valued function $a:\G \rightarrow \mathbb{C}$, the identity $\check{a}(-g)=\overline{\check{a}(g)}$ holds for all $g \in \G$. Moreover, if $a(-g)=a(g)$ for all $g \in \G$, then $\check{a}(-g)=\check{a}(g)$ for all $g \in \G$ and $\check{a}$ is a real-valued function.
\end{lemma}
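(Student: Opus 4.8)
The plan is to prove both assertions directly from the definition of the discrete Fourier transform together with the two character identities recorded in Lemma~\ref{lemma:properties-of-characters}. The crucial preliminary observation, which drives both parts, is that $\widehat{-g}(h) = \overline{\widehat{g}(h)}$ for all $g, h \in \G$: applying the identity $\widehat{kg}(h) = \widehat{g}(kh)$ with $k = -1$ gives $\widehat{-g}(h) = \widehat{g}(-h)$, and then $\widehat{g}(-h) = \overline{\widehat{g}(h)}$ by the first identity of the same lemma.

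For the first claim, I would expand the definition and substitute this observation:
$$\check{a}(-g) = \sum_{h \in \G} \widehat{-g}(h)\, a(h) = \sum_{h \in \G} \overline{\widehat{g}(h)}\, a(h),$$
and then use that $a$ is real-valued, so $a(h) = \overline{a(h)}$, to pull the conjugation outside the sum: $\sum_{h \in \G} \overline{\widehat{g}(h)}\,\overline{a(h)} = \overline{\sum_{h \in \G} \widehat{g}(h)\, a(h)} = \overline{\check{a}(g)}$. This yields $\check{a}(-g) = \overline{\check{a}(g)}$.

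For the second claim, under the extra hypothesis $a(-g) = a(g)$, I would instead keep the form $\widehat{-g}(h) = \widehat{g}(-h)$ and reindex the sum via the bijection $h \mapsto -h$ of $\G$:
$$\check{a}(-g) = \sum_{h \in \G} \widehat{g}(-h)\, a(h) = \sum_{h \in \G} \widehat{g}(h)\, a(-h) = \sum_{h \in \G} \widehat{g}(h)\, a(h) = \check{a}(g),$$
where the last equality uses the symmetry of $a$. Combining this with the first part gives $\check{a}(g) = \check{a}(-g) = \overline{\check{a}(g)}$, so $\check{a}(g)$ equals its own conjugate and is therefore real for every $g$, establishing that $\check{a}$ is real-valued.

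The computation is short and essentially bookkeeping; the only point requiring care is the legitimacy of the reindexing step, namely that $h \mapsto -h$ is a bijection of the finite abelian group $\G$ onto itself (so that summing over $-h$ is the same as summing over $h$), together with choosing which of the two equivalent expressions for $\widehat{-g}(h)$ to deploy in each part. I expect no genuine obstacle beyond this.
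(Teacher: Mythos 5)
Your proof is correct and complete; the paper itself offers no proof of this lemma (it simply cites Matsen, Section 2), and your argument — deriving $\widehat{-g}(h)=\widehat{g}(-h)=\overline{\widehat{g}(h)}$ from Lemma~\ref{lemma:properties-of-characters}, pulling the conjugation through the sum for the first claim, and reindexing via the bijection $h\mapsto -h$ for the second — is exactly the standard computation one would expect the cited source to contain. No gaps.
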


In the proof of Theorem~\ref{theorem:embeddable_matrices}, we will use that $\check{\psi}$ and $\check{f}$ are real-valued. For this reason, in this paper we consider only group-based models that are equipped with \textit{symmetric} labeling functions, i.e. $L(g)=L(-g)$ for all $g\in \G$. In other words, a symmetric group-based model assumes that the transition matrices are real symmetric matrices.

The discrete Fourier transform is a linear endomorphism on $\mathbb{C}^{\G}$.  We will denote its matrix by $K$. In particular, the entries of $K$ are $K_{g,h}=\widehat{g}(h)$ for $g,h \in \G$. The matrix $K$ is symmetric for any finite abelian group~\cite[Section 3.2]{{luong2009fourier}}. The inverse of the discrete Fourier transformation matrix is $K^{-1}=\frac{1}{|\G|} K^*$, where $K^*$ denotes the adjoint of $K$~\cite[Corollary 3.2.2]{luong2009fourier}.

The following lemma describes the relation between functionals $f$ and $\psi$ in the case $f(-g)=f(g)$ and $\psi(-g)=\psi(g)$ for all $g \in \G$.

\begin{lemma}[\cite{Matsen}, Lemma 2.2] \label{lemma:exponential-of-DFT}
Let $Q$ be determined by $\psi \in \mathbb{R}^{\G}$ and $P$ be determined by $f \in \mathbb{R}^{\G}$ as described earlier in this section such that $P=e^Q$. Furthermore, assume that $\psi(g)=\psi(-g)$ and $f(g)=f(-g)$ for all $g \in \G$. Then, $\check{f}(g)=e^{\check{\psi}(g)}$ for all $g\in\mathcal{G}$.
\end{lemma}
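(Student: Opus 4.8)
The plan is to show that the discrete Fourier transform matrix $K$ simultaneously diagonalizes both $Q$ and $P$, with diagonal entries equal to the values $\check\psi(g)$ and $\check f(g)$, and then to exploit the fact that conjugation commutes with the matrix exponential. Recall that $Q$ has the $\G$-circulant form $Q_{a,b}=\psi(b-a)$ and likewise $P_{a,b}=f(b-a)$, so the whole argument reduces to an eigenvector computation for circulant-type matrices indexed by the group.

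First I would exhibit the columns of $K$ as a common eigenbasis. Fix $g\in\G$ and consider the vector $v_g=(\widehat g(a))_{a\in\G}$; since $K$ is symmetric we have $\widehat g(a)=K_{g,a}=K_{a,g}$, so $v_g$ is precisely the $g$-th column of $K$. The key computation is then
\[
(Qv_g)_a=\sum_{b\in\G}\psi(b-a)\,\widehat g(b)=\sum_{c\in\G}\psi(c)\,\widehat g(a+c)=\widehat g(a)\sum_{c\in\G}\psi(c)\,\widehat g(c)=\check\psi(g)\,\widehat g(a),
\]
where I substituted $c=b-a$ and used that $\widehat g$ is a group homomorphism, hence $\widehat g(a+c)=\widehat g(a)\widehat g(c)$, together with the definition $\check\psi(g)=\sum_{c}\widehat g(c)\,\psi(c)$. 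This shows that $v_g$ is an eigenvector of $Q$ with eigenvalue $\check\psi(g)$, i.e. $QK=K D_\psi$ with $D_\psi=\operatorname{diag}\big(\check\psi(g)\big)_{g\in\G}$. The identical computation with $f$ in place of $\psi$ gives $PK=K D_f$ with $D_f=\operatorname{diag}\big(\check f(g)\big)_{g\in\G}$.

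Since $K$ is invertible, with $K^{-1}=\tfrac1{|\G|}K^*$, these relations read $K^{-1}QK=D_\psi$ and $K^{-1}PK=D_f$. Conjugating the hypothesis $P=e^{Q}$ by $K$ and using that the exponential commutes with conjugation, I obtain
\[
D_f=K^{-1}PK=K^{-1}e^{Q}K=e^{K^{-1}QK}=e^{D_\psi}=\operatorname{diag}\big(e^{\check\psi(g)}\big)_{g\in\G}.
\]
Comparing diagonal entries yields $\check f(g)=e^{\check\psi(g)}$ for every $g\in\G$, as claimed.

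The only step requiring genuine care is the eigenvector computation: one must handle the change of summation variable within the group and correctly invoke the homomorphism property of the character $\widehat g$, as well as the symmetry $K_{g,a}=K_{a,g}$ (equivalently $\widehat g(a)=\widehat a(g)$) used to identify $v_g$ with a column of $K$. The remaining ingredients---invertibility of $K$ and the identity $e^{K^{-1}QK}=K^{-1}e^{Q}K$---are standard. The symmetry hypotheses on $\psi$ and $f$ guarantee, via Lemma~\ref{remark:real-valued-DFT}, that $\check\psi$ and $\check f$ are real-valued, so the identity relates real quantities; this realness is not needed for the identity itself but is what makes it usable in the proof of Theorem~\ref{theorem:embeddable_matrices}.
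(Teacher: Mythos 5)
Your proof is correct. The paper itself does not prove this lemma---it is quoted from Matsen---but your argument is exactly the standard one underlying that citation and the paper's own surrounding machinery: the eigenvector computation you carry out is precisely the content of Lemma~\ref{lemma:eigenpairs}, and the conjugation step $D_f=e^{D_\psi}$ is the diagonalization $Q=KD_1K^{-1}$, $P=KD_2K^{-1}$ discussed immediately after it. Your closing remark is also accurate: the symmetry hypotheses on $\psi$ and $f$ are not needed for the identity $\check f(g)=e^{\check\psi(g)}$ itself, only to ensure via Lemma~\ref{remark:real-valued-DFT} that both sides are real, which is what Theorem~\ref{theorem:embeddable_matrices} ultimately uses.
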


\begin{lemma}\label{lemma:eigenpairs}
Let $Q$ be determined by $\psi \in \mathbb{R}^{\G}$ and $P$ be determined by $f \in \mathbb{R}^{\G}$ as described earlier in this section. Furthermore, assume that $\psi(g)=\psi(-g)$ and $f(g)=f(-g)$ for all $g \in \G$. Let $K_g$ denote the column of the discrete Fourier transform matrix labeled by $g$. The eigenpairs of $Q$ (resp.~$P$) are $(\check{\psi}(g),K_g)$ (resp.~$(\check{f}(g),K_g)$) for $g \in \mathcal{G}$.
\end{lemma}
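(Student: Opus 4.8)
The plan is to verify directly that each column $K_g$ of the discrete Fourier transform matrix is an eigenvector of $Q$ (and of $P$), with eigenvalue given exactly by the discrete Fourier transform evaluated at $g$; since the $|\G|$ columns of the invertible matrix $K$ are linearly independent, this produces a full eigenbasis and hence accounts for all eigenpairs.

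First I would fix the indexing. Because $K$ is symmetric with $K_{g,h}=\widehat{g}(h)$, the $h$-th entry of the column $K_g$ is $(K_g)_h = K_{h,g} = \widehat{g}(h)$. The key structural fact is that $Q$ is \emph{$\G$-circulant}: its entry $Q_{a,b}=\psi(b-a)$ depends only on the group difference $b-a$. The same holds for $P$ with $f$ in place of $\psi$.

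The main computation is then a short convolution argument. For any $a\in\G$,
\[
(QK_g)_a = \sum_{b\in\G} Q_{a,b}(K_g)_b = \sum_{b\in\G}\psi(b-a)\,\widehat{g}(b).
\]
Substituting $c=b-a$ and using that $\widehat{g}$ is a character, so that $\widehat{g}(c+a)=\widehat{g}(c)\widehat{g}(a)$, this becomes
\[
\widehat{g}(a)\sum_{c\in\G}\psi(c)\,\widehat{g}(c) = \check{\psi}(g)\,\widehat{g}(a) = \check{\psi}(g)\,(K_g)_a,
\]
where the inner sum is $\check{\psi}(g)$ by the definition of the discrete Fourier transform. Hence $QK_g=\check{\psi}(g)K_g$, so $(\check{\psi}(g),K_g)$ is an eigenpair of $Q$. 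Replacing $\psi$ by $f$ in the same computation gives $PK_g=\check{f}(g)K_g$, establishing the claim for $P$.

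Finally I would record completeness and consistency. Since $K$ is invertible, the columns $\{K_g\}_{g\in\G}$ are linearly independent and number $|\G|$, so they form an eigenbasis and the listed pairs exhaust all eigenpairs. By Lemma~\ref{remark:real-valued-DFT}, the symmetry hypotheses $\psi(-g)=\psi(g)$ and $f(-g)=f(g)$ force $\check{\psi}(g)$ and $\check{f}(g)$ to be real, so the eigenvalues are real; and the identity $\check{f}(g)=e^{\check{\psi}(g)}$ of Lemma~\ref{lemma:exponential-of-DFT} is precisely the statement that the eigenvalues of $P=e^Q$ are the exponentials of those of $Q$ on this common eigenbasis, confirming internal consistency. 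There is no genuine obstacle in this argument: the only point requiring care is the change of variables $c=b-a$ combined with the homomorphism property of $\widehat{g}$, which is exactly what turns the difference-structured matrix into one diagonalized by $K$.
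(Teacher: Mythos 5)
Your proof is correct. Note that the paper does not actually prove this lemma: its ``proof'' is a one-line citation to the proof of Lemma~2.2 in Matsen's paper, so you have supplied the argument that the authors outsource. Your computation is the standard one for $\G$-circulant matrices: the entry of $QK_g$ at $a$ is a convolution $\sum_b \psi(b-a)\widehat{g}(b)$, and the change of variables $c=b-a$ together with the homomorphism property $\widehat{g}(c+a)=\widehat{g}(c)\widehat{g}(a)$ factors out $\widehat{g}(a)=(K_g)_a$ and leaves exactly $\check{\psi}(g)=\sum_c\widehat{g}(c)\psi(c)$ as defined in the paper. The indexing step $(K_g)_h=K_{h,g}=\widehat{g}(h)$ is justified by the symmetry of $K$, which the paper records, and invertibility of $K$ gives completeness of the eigenbasis. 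You are also right that the hypotheses $\psi(-g)=\psi(g)$ and $f(-g)=f(g)$ play no role in the eigenpair computation itself and only guarantee (via Lemma~\ref{remark:real-valued-DFT}) that the eigenvalues are real; the consistency check against Lemma~\ref{lemma:exponential-of-DFT} is a nice sanity check but not logically needed. No gaps.
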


\begin{proof}
    This result is stated in the proof of~\cite[Lemma 2.2]{Matsen}.
\end{proof}

In particular, in the case of a Markov matrix, the column vector of ones is an eigenvector with eigenvalue one. In the case of a rate matrix, the column vector of ones is an eigenvector with eigenvalue zero. A direct consequence of Lemma~\ref{lemma:eigenpairs} is that $Q$ and $P$ are diagonalizable by $K$, i.e. $Q=KD_1K^{-1}$ and $P=KD_2K^{-1}$ where $D_1$ and $D_2$ are diagonal matrices with diagonals given by the vectors $\check{\psi}$ and $\check{f}$ of $\mathbb{R}^{\G}$ respectively.

\section{\texorpdfstring{$\mathcal{G}$}{G}-compatible labeling functions}\label{Section:labelling}
In this section, we introduce a class of labeling functions with the property that the symmetries of the probability vector are preserved under the discrete Fourier transformation. This property is required for any result that is proven using the discrete Fourier transform. Notably, labeling functions for all common group-based models (CFN, K3P, K2P, and JC models) are $\G$-compatible.

\begin{definition}
Let $\G$ be a finite additive abelian group, $\mathcal{L}$ a set and $L:\mathcal{G}\rightarrow\mathcal{L}$ a labeling function. Let $K$ be the discrete Fourier transformation matrix for $\G$ and $x_L$ be the column vector of length $|\G|$ whose $g$-th component is the indeterminate $x_{L(g)}$. We say that $L$ is a \textit{$\mathcal{G}$-compatible labeling function} if  for every $g,h\in \mathcal{G}$ with $L(g)=L(h)$, we have that $K_{g,:} \cdot x_L=K_{h,:} \cdot x_L$ and $(K^{-1})_{g,:} \cdot x_L=(K^{-1})_{h,:} \cdot x_L$. Here $M_{a,:}$ denotes the row of $M$ indexed by group element $a.$
\end{definition}

A labeling function that maps every group element to a different label is trivially $\G$-compatible.

\begin{remark}\label{remark_compatible_labeling}
In the definition of a $\mathcal{G}$-compatible labeling, we require that the matrices $K$ and $K^{-1}$ preserve the symmetries of the vector of labels $x_L$. For symmetric group-based models, it is enough to require that only $K$ or $K^{-1}$ preserves the symmetries of the vector of labels $x_L$. Recall that
$$
K^{-1} \cdot x_L = \frac{1}{|\G|} \cdot K^* \cdot x_L = \frac{1}{|\G|} \cdot \overline K \cdot x_L.
$$
The property $\widehat{g}(-h)=\overline{\widehat{g}(h)}$ implies $K_{g,-h}=\overline{K_{g,h}}$ and $\overline{K_{g,-h}}=K_{g,h}$. If $-h=h$, this means $\overline{K_{g,h}}=K_{g,h}$ for all $g \in \G$. If $-h \neq h$, then taking into account that $x_{L(-h)}=x_{L(h)}$ gives $\overline{K_{g,h}} \cdot x_{L(h)}+\overline{K_{g,-h}} \cdot x_{L(-h)}=K_{g,h} \cdot x_{L(h)}+K_{g,-h} \cdot x_{L(-h)}$. Hence $K^{-1} \cdot x_L = 1/|\G|\cdot K \cdot x_L$.
\end{remark}

\begin{remark}
If a labeling function $L$ is symmetric $\G$-compatible and $Q$ is a $(\G,L)$-rate matrix, then a Markov matrix $P=e^Q$ is a $(\G,L)$-Markov matrix, i.e. $P_{g,h}=f(h-g)$ for a vector $f \in \mathbb{R}^G$ and $f(g)=f(h)$ whenever $L(g)=L(h)$.
\end{remark}

\begin{example} \label{example:labelings-of-Z2xZ2}
Let $\G= \mathbb{Z}_2\times\mathbb{Z}_2$. The discrete Fourier transformation matrix for $\G$ is
	$$K=\begin{pmatrix}
	1&1&1&1\\1&-1&1&-1\\1&1&-1&-1\\1&-1&-1&1\\
	\end{pmatrix}.$$
To show $\G$-compatibility for the three labeling functions from Example~\ref{example:labeling-functions-Z2xZ2}, it is enough to check that $K$ preserves the symmetries of $x_L$.
The labeling function of the Jukes-Cantor model is $\mathcal{G}$-compatible, since
		$$K\cdot \begin{pmatrix}
	x_0\\x_1\\x_1\\x_1\\
	\end{pmatrix}=\begin{pmatrix}
	x_0+3x_1\\x_0-x_1\\x_0-x_1\\x_0-x_1\\
	\end{pmatrix}.
	$$
The labeling function of the Kimura 2-parameter model is $\mathcal{G}$-compatible, since
$$
K\cdot
\begin{pmatrix}
x_0\\x_1\\x_1\\x_2\\
\end{pmatrix}=\begin{pmatrix}
x_0+2x_1+x_2\\x_0-x_2\\x_0-x_2\\x_0-2x_1+x_2\\
\end{pmatrix}.
$$
In the literature, usually $L((1,0))=L((1,1))$ in the Kimura 2-parameter model. The reason for this difference is that we use the discrete Fourier transform matrix in a format, which better demonstrates that it is the Kronecker product of discrete Fourier transformation matrices for $\mathbb{Z}_2$. The labeling function of the Kimura 3-parameter is $\mathcal{G}$-compatible, because each group element maps to a different label.
\end{example}

Sturmfels and Sullivant consider a different class of labeling functions, called friendly labelings~\cite{SturmfelsSullivant}.  Group-based models with friendly labeling functions are equivalent to $\mathcal{G}$-models defined by Micha\l ek~\cite[Remark 5.2]{michalek2011geometry}.  $\mathcal{G}$-models are constructed using an arbitrary group $\mathcal{G}$ that has a normal, abelian subgroup $\mathcal{H}$ which acts transitively and freely on the finite set of states. The importance of $\mathcal{G}$-models is that they are toric.  We explore connections between friendly labelings and $\mathcal{G}$-compatible labelings in Appendix~\ref{appendix:friendly-labelings}. We conjecture that every symmetric $\mathcal{G}$-compatible labeling is a friendly labeling, but not vice versa.

The following lemma provides a necessary condition for $\mathcal{G}$-compatible labeling functions.

\begin{lemma}\label{necessary_compatible_labeling}
Let $\G$ be a finite additive abelian group, $\mathcal{L}$ a set and $L:\mathcal{G}\rightarrow\mathcal{L}$ a labeling function. If $L$ is $\mathcal{G}$-compatible, then $L(0)\neq L(g)$ for any $g\neq 0$.
\end{lemma}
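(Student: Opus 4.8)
The plan is to prove the contrapositive: if $L(0) = L(g)$ for some $g \neq 0$, then $L$ cannot be $\mathcal{G}$-compatible. The $\mathcal{G}$-compatibility condition requires that for every pair $g,h$ with $L(g)=L(h)$, the rows $K_{g,:}$ and $K_{h,:}$ agree when dotted with the symbolic vector $x_L$. So if $L(0)=L(g)$ with $g \neq 0$, I would need to exhibit that $K_{0,:} \cdot x_L \neq K_{g,:} \cdot x_L$ as polynomials in the indeterminates $x_{L(\cdot)}$, which contradicts $\mathcal{G}$-compatibility.

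**First I would** use the explicit form of the discrete Fourier transform matrix. The row indexed by $0 \in \mathcal{G}$ is special: since $\widehat{0}$ is the trivial character, we have $K_{0,h} = \widehat{0}(h) = 1$ for every $h \in \mathcal{G}$. Hence $K_{0,:} \cdot x_L = \sum_{h \in \mathcal{G}} x_{L(h)}$, a sum of all the label-indeterminates with all coefficients equal to $1$. For any other row indexed by $g \neq 0$, the entries $K_{g,h} = \widehat{g}(h)$ are values of the nontrivial character $\widehat{g}$, which are roots of unity; crucially, by the orthogonality of characters, $\sum_{h \in \mathcal{G}} \widehat{g}(h) = 0$ for $g \neq 0$. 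This is the key asymmetry I want to exploit.

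**The main step** is then to compare the two polynomials $K_{0,:} \cdot x_L$ and $K_{g,:} \cdot x_L$ in the polynomial ring generated by the distinct label-variables. The first has every coefficient equal to $1$. I would argue that the second cannot equal the first: grouping terms by label value, the coefficient of each indeterminate $x_\ell$ in $K_{g,:}\cdot x_L$ is $\sum_{h : L(h)=\ell} \widehat{g}(h)$, and for these two polynomials to be equal every such coefficient must equal $1$. Summing these coefficients over all labels recovers $\sum_{h} \widehat{g}(h) = 0$, whereas summing the all-ones coefficients of $K_{0,:}\cdot x_L$ gives $|\mathcal{G}| > 0$. Since $0 \neq |\mathcal{G}|$, the two polynomials differ, so $K_{0,:}\cdot x_L \neq K_{g,:}\cdot x_L$, contradicting $\mathcal{G}$-compatibility.

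**The subtle point to be careful about** is that $\mathcal{G}$-compatibility is a statement about \emph{polynomial} equality in the indeterminates $x_{L(g)}$, not a numerical coincidence: distinct labels give genuinely distinct variables, so I must treat the dot products as formal linear polynomials and compare coefficients rather than plug in values. The argument above sidesteps any case analysis by passing to the sum of all coefficients, which turns the delicate coefficient-matching into the single clean identity $\sum_{h \in \mathcal{G}} \widehat{g}(h) = 0$ versus $|\mathcal{G}|$; this orthogonality relation for nontrivial characters is the one fact I would invoke, and it is exactly what makes row $0$ incompatible with any other row under a label-collision. I would conclude that $L(0) \neq L(g)$ for all $g \neq 0$ whenever $L$ is $\mathcal{G}$-compatible.
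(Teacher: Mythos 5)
Your proposal is correct, and it reaches the conclusion by a slightly different mechanism than the paper. Both proofs compare the formal linear polynomials $K_{0,:}\cdot x_L$ and $K_{g,:}\cdot x_L$: the paper observes that the row indexed by $0$ is all ones while any other row, being the value table of a nontrivial character, contains an entry with real part strictly less than one, so the coefficient of the corresponding label variable (a sum of roots of unity) cannot match the all-ones count. You instead pass to the single global invariant obtained by evaluating both polynomials at $x_\ell=1$ for all $\ell$: this gives $|\mathcal{G}|$ for row $0$ and $\sum_{h\in\mathcal{G}}\widehat{g}(h)=0$ for $g\neq 0$ by orthogonality of characters (the same identity the paper itself invokes later, in the proof of Lemma~\ref{lemma:general_jukes_cantor}). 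Your version is arguably cleaner since it avoids any per-coefficient case analysis and reduces everything to one standard identity. One small imprecision: you assert that for the two polynomials to agree ``every such coefficient must equal $1$,'' but the coefficient of $x_\ell$ in $K_{0,:}\cdot x_L$ is $|L^{-1}(\ell)|$, not $1$, unless the label class is a singleton. This does not damage the argument, because your actual contradiction only uses the sum of all coefficients, which is a necessary condition for polynomial equality regardless; still, the sentence should be corrected to ``must equal $|L^{-1}(\ell)|$'' or simply deleted.
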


\begin{proof}
    Let $K$ be the discrete Fourier transformation matrix for $\G$. The entries of $K$ are $\widehat{g}(h)$ for $g,h \in \G$, which are roots of unity. The row $K_{0,:}$ consists of ones. On the other hand, no other row of $K$ consists of ones only, as this would contradict the uniqueness of the identity element in the character group. In particular, every other row of $K$ contains at least one element whose real part is strictly less than one. Thus it is impossible that $K_{0,:}\cdot x_L = K_{g,:}\cdot x_L$ for $g \neq 0$.
\end{proof}

Table~\ref{compatible_labeling} summarizes up to isomorphism all possible symmetric $\mathcal{G}$-compatible labeling functions for additive abelian groups of order up to eight. In the table, two group elements receive the same label if they belong to the same subset in a partition of $\mathcal{G}$.

    \begin{table}
     \centering
    \begin{tabular}{|p{0.75cm}|p{2cm}|p{14cm}|}
    \hline
        $n$ &  Group & Symmetric $\mathcal{G}$-compatible labelings \\
        \hline
         2 & $\mathbb{Z}_2$ & \{\{0\},\{1\}\} \\
         \hline
         3 & $\mathbb{Z}_3$ &
         \{\{0\},\{1,2\}\}\\
         \hline
         4 & $\mathbb{Z}_4$ &
         \{\{0\},\{1,2,3\}\},\{\{0\},\{1,3\},\{2\}\}\\
         \hline
         4 & $\mathbb{Z}_2\times \mathbb{Z}_2 $ &   \{\{(0,0)\},\{(0,1),(1,0),(1,1)\}\},\{\{(0,0)\},\{(0,1),(1,0)\},\{(1,1)\}\},\newline\{\{(0,0)\},\{(0,1)\},\{(1,0)\},\{(1,1)\}\}\\
         \hline
         5 & $\mathbb{Z}_5$ &
        \{\{0\},\{1,2,3,4\}\},
          \{\{0\},\{1,4\},\{2,3\}\}\\
         \hline
         6 & $\mathbb{Z}_2\times \mathbb{Z}_3 $ &\{\{(0,0)\},\{(0,1),(0,2),(1,0),(1,1),(1,2)\}\},\newline\{\{(0,0)\},\{(0,1),(0,2)\},\{(1,0)\},\{(1,1),(1,2)\}\}\\\hline
         7& $\mathbb{Z}_7 $ &\{\{0\},\{1,2,3,4,5,6\}\},\{\{0\},\{1,6\},\{2,5\},\{3,4\}\}\\\hline
         8& $\mathbb{Z}_8 $ & \{\{0\},\{1,2,3,4,5,6,7\}\},\{\{0\},\{1,3,5,7\},\{2,6\},\{4\}\},\newline\{\{0\},\{1,7\},\{2,6\},\{3,5\},\{4\}\}\\\hline
         8&$\mathbb{Z}_2\times\mathbb{Z}_4$ & \{\{(0,0)\},\{(0,1),(0,2),(0,3),(1,0),(1,1),(1,2),(1,3)\}\},\newline\{\{(0,0)\},\{(0,1),(0,3),(1,1),(1,3)\},\{(0,2)\},\{(1,0),(1,2)\}\},\newline\{\{(0,0)\},\{(0,1),(0,2),(0,3)\},\{(1,0)\},\{(1,1),(1,2),(1,3)\}\},\newline\{\{(0,0)\},\{(0,1),(0,3),(1,0)\},\{(0,2),(1,1),(1,3)\},\{(1,2)\}\},\newline\{\{(0,0)\},\{(0,1),(0,3)\},\{(0,2)\},\{(1,0)\},\{(1,1),(1,3)\},\{(1,2)\}\}\\\hline
         8&$\mathbb{Z}_2\times\mathbb{Z}_2\times\mathbb{Z}_2$ & \{\{(0,0,0)\},\{(0,0,1),(0,1,0),(0,1,1),(1,0,0),(1,0,1),(1,1,0),(1,1,1)\}\},
         \newline\{\{(0,0,0)\},\{(0,0,1)\},\{0,1,0),(1,0,0),(1,1,0)\},\{(0,1,1),(1,0,1),(1,1,1)\}\},
         \newline\{\{(0,0,0)\},\{(0,0,1),(1,0,0),(1,0,1)\}, \{(0,1,0)\},\{(0,1,1),(1,1,0),(1,1,1)\}\},
         \newline\{\{(0,0,0)\},\{(0,0,0),(0,1,0),(0,1,1)\},\{(1,0,0)\},\{(1,0,1),(1,1,0),(1,1,1)\}\},
         \newline\{\{(0,0,0)\},\{(0,0,1),(0,1,0),(1,0,1),(1,1,0)\},\{(0,1,1)\},\{(1,0,0),(1,1,1)\}\},
         \newline\{\{(0,0,0)\},\{(0,0,1),(0,1,0),(1,0,0)\},\{(0,1,1),(1,0,1),(1,1,0)\},\{(1,1,1)\}\},
         \newline\{\{(0,0,0)\},\{(0,0,1),(1,1,1)\},\{(0,1,0),(0,1,1),(1,0,0),(1,0,1)\},\{(1,1,0)\}\},
         \newline\{\{(0,0,0)\},\{(0,0,1),(0,1,1),(1,0,0),(1,1,0)\},\{(0,1,0),(1,1,1)\},\{(1,0,1)\}\},
         \newline\{\{(0,0,0)\},\{(0,0,1)\},\{(0,1,0),(1,0,0)\},\{(0,1,1),(1,0,1)\},\{(1,1,0)\},\{(1,1,1)\}\},
         \newline\{\{(0,0,0)\},\{(0,0,1),(0,1,0)\},\{(0,1,1)\},\{(1,0,0)\},\{(1,0,1),(1,1,0)\},\{(1,1,1\}\},
         \newline\{\{(0,0,0)\},\{(0,0,1),(1,0,0)\},\{(0,1,0)\},\{(0,1,1),(1,1,0)\},\{(1,0,1)\},\{(1,1,1\}\},
         \newline\{\{(0,0,0)\},\{(0,0,1)\},\{(0,1,0)\},\{(0,1,1)\},\{(1,0,0)\},\{(1,0,1)\},\{(1,1,0)\},\{(1,1,1)\}\}
         \\\hline
    \end{tabular}
    \caption{Symmetric $\mathcal{G}$-compatible labelings for abelian groups of order $n\leq8$ up to isomorphism.}
     \label{compatible_labeling}
    \end{table}

We saw in Example \ref{example:labelings-of-Z2xZ2} that the labeling function of the Jukes-Cantor model that assigns the same label to each nonzero element of the group $\G=\mathbb{Z}_2\times\mathbb{Z}_2$ is a $\G$-compatible labeling. This example can be generalized to other groups.

\begin{lemma}\label{lemma:general_jukes_cantor}
    Let $\mathcal{G}$ be a finite abelian group. Let $L:\mathcal{G}\rightarrow \{0,1\}$ be a labeling function such that and $L(0)=0$ and $L(g)=1$ for $g\neq 0$. Then the labeling function $L$ is symmetric $\mathcal{G}$-compatible.
\end{lemma}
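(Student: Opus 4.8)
The plan is to verify the two defining requirements of a symmetric $\G$-compatible labeling separately, and to reduce the computational burden using Remark~\ref{remark_compatible_labeling}. Symmetry of $L$ is immediate: since $-0=0$ and $-g\neq 0$ whenever $g\neq 0$, we have $L(-g)=L(g)$ for every $g\in\G$. Because $L$ is symmetric, Remark~\ref{remark_compatible_labeling} tells us that $K^{-1}x_L=\frac{1}{|\G|}Kx_L$, so it suffices to check that $K$ alone preserves the symmetries of $x_L$; I would not compute with $K^{-1}$ directly at all. Next I would observe that $L$ has only two label classes, namely $\{0\}$ and $\G\setminus\{0\}$, and that the first is a singleton. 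Hence the only nontrivial condition to verify is that $K_{g,:}\cdot x_L$ takes a common value across all the rows indexed by $g\neq 0$.

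The key step is to evaluate these row products using character theory. Here $x_L$ is the vector with $g$-th entry $x_{L(g)}$, so its entry at $0$ equals $x_0$ and all its other entries equal $x_1$. Since $K_{g,h}=\widehat{g}(h)$, I would write
\[
K_{g,:}\cdot x_L=\sum_{h\in\G}\widehat{g}(h)\,x_{L(h)}=\widehat{g}(0)\,x_0+\Bigl(\sum_{h\neq 0}\widehat{g}(h)\Bigr)x_1 .
\]
The crux of the argument is the orthogonality relation $\sum_{h\in\G}\widehat{g}(h)=0$ for every nontrivial character, together with the fact that $\widehat{g}$ is nontrivial precisely when $g\neq 0$ (as $g\mapsto\widehat{g}$ is an isomorphism sending $0$ to the trivial character). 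Using $\widehat{g}(0)=1$, this gives $\sum_{h\neq 0}\widehat{g}(h)=-1$ for each $g\neq 0$, so that
\[
K_{g,:}\cdot x_L=x_0-x_1\qquad\text{for all }g\neq 0,
\]
a value independent of $g$. This is exactly the required equality $K_{g,:}\cdot x_L=K_{h,:}\cdot x_L$ whenever $L(g)=L(h)=1$.

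Combining the two parts, $L$ is symmetric and $K$ preserves the symmetries of $x_L$, so by Remark~\ref{remark_compatible_labeling} the labeling $L$ is symmetric $\G$-compatible. I expect the only genuine ingredient to be the character sum identity $\sum_{h\in\G}\widehat{g}(h)=0$ for $g\neq 0$; if one prefers a self-contained argument rather than citing it, it follows in one line by choosing $k$ with $\widehat{g}(k)\neq 1$ and noting that reindexing the sum by $h\mapsto h+k$ yields $\widehat{g}(k)\sum_{h}\widehat{g}(h)=\sum_{h}\widehat{g}(h)$, forcing the sum to vanish. Everything else is a routine bookkeeping of the two label classes.
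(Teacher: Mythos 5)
Your proof is correct and follows essentially the same route as the paper: both arguments reduce to checking that $K$ alone preserves the symmetries of $x_L$ (via the symmetry of $L$ and Remark~\ref{remark_compatible_labeling}) and then compute $K_{g,:}\cdot x_L = x_0 - x_1$ for all $g\neq 0$ using the vanishing of nontrivial character sums, which is exactly the row-sum identity the paper cites from the literature. Your added one-line derivation of that character sum identity is a minor self-contained bonus but not a different method.
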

\begin{proof}
Clearly the labeling function $L$ is symmetric. Let $K$ be the discrete  Fourier transformation matrix for $\G$. By~\cite[Corollary 3.2.1]{luong2009fourier}, we have
	\[
    \sum_{h\in \mathcal{G}}K_{g,h}= \left\{
    \begin{array}{ll}
    n,& g=0 \\
    0,& g\neq 0\\
    \end{array}
    \right..
    \]
Hence
$$
K_{g,:}\cdot x_L = \left\{
    \begin{array}{ll}
    x_0+(n-1)x_1,& g=0 \\
    x_0-x_1,& g\neq 0\\
    \end{array}
    \right..
$$
Hence $L$ is a $\mathcal{G}$-compatible labeling function.
\end{proof}

We call the model in Lemma~\ref{lemma:general_jukes_cantor} the \emph{general Jukes-Cantor model}. We finish this section with giving another class of labeling functions that are $\G$-compatible for every finite abelian group $\G$.

\begin{lemma}\label{strictly_symmetric_implies_compatible}
	Let $\G$ be a finite additive abelian group, $\mathcal{L}$ a finite set and $L:\mathcal{G}\rightarrow\mathcal{L}$ a labeling function such that for any two distinct elements $g,h\in \mathcal{G}$,  $L(g)=L(h)$ if and only if $g=-h$. Then $L$ is $\mathcal{G}$-compatible.
\end{lemma}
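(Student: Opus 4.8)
The plan is to reduce $\G$-compatibility to a single symmetry and then verify that symmetry by reindexing the Fourier sum. First I would observe that the hypothesis forces $L$ to be symmetric: for any $g\in\G$, either $g=-g$, or else $g\neq -g$ and then $L(g)=L(-g)$ by assumption. In both cases $x_{L(g)}=x_{L(-g)}$, so the label vector $x_L$ has exactly the symmetry required to invoke Remark~\ref{remark_compatible_labeling}. By that remark, for symmetric group-based models it suffices to show that $K$ alone preserves the symmetries of $x_L$, i.e.\ that $K_{g,:}\cdot x_L=K_{h,:}\cdot x_L$ whenever $L(g)=L(h)$; the corresponding identity for $K^{-1}$ then follows automatically.

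Next I would pin down precisely which equalities must be checked. By the defining property of $L$, two \emph{distinct} group elements share a label exactly when they are negatives of one another. Hence, apart from the trivial diagonal case $g=h$, the only instances of the required identity are of the form $K_{g,:}\cdot x_L=K_{-g,:}\cdot x_L$ for those $g$ with $g\neq -g$. So the whole lemma comes down to verifying this one family of equations.

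To establish $K_{g,:}\cdot x_L=K_{-g,:}\cdot x_L$, I would write both sides using $K_{a,h}=\widehat{a}(h)$ and reindex the right-hand sum by the bijection $h\mapsto -h$ of $\G$:
\[
K_{-g,:}\cdot x_L=\sum_{h\in\G}\widehat{-g}(h)\,x_{L(h)}=\sum_{h\in\G}\widehat{-g}(-h)\,x_{L(-h)}.
\]
Lemma~\ref{lemma:properties-of-characters} then gives $\widehat{-g}(h)=\widehat{g}(-h)=\overline{\widehat{g}(h)}$, and applying this twice yields $\widehat{-g}(-h)=\overline{\widehat{g}(-h)}=\widehat{g}(h)$; meanwhile the symmetry of $L$ gives $x_{L(-h)}=x_{L(h)}$. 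Substituting these two identities collapses the sum to $\sum_{h\in\G}\widehat{g}(h)\,x_{L(h)}=K_{g,:}\cdot x_L$, which is the desired equality.

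I do not anticipate a genuine obstacle: the entire content is that a strictly symmetric labeling identifies only $g$ with $-g$, and the Fourier matrix respects this single identification because conjugating a character corresponds to negating its argument (this is the same mechanism already used in Remark~\ref{remark_compatible_labeling}). The only point requiring a little care is the consistent application of the character identities of Lemma~\ref{lemma:properties-of-characters} during the reindexing; the $2$-torsion elements with $g=-g$ need no separate treatment, since they sit as singletons in the induced partition and therefore impose no constraint at all.
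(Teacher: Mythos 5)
Your proposal is correct and follows essentially the same route as the paper: both reduce the claim to the single identity $K_{g,:}\cdot x_L=K_{-g,:}\cdot x_L$ and verify it by reindexing the Fourier sum via $h\mapsto -h$, using the character identities of Lemma~\ref{lemma:properties-of-characters} together with the symmetry $x_{L(-h)}=x_{L(h)}$. Your additional explicit appeal to Remark~\ref{remark_compatible_labeling} to dispose of the $K^{-1}$ condition is a welcome clarification but not a different argument.
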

\begin{proof}
	 By Lemma~\ref{lemma:properties-of-characters}, the identity $\widehat{-g}(h)=\widehat{g}(-h)$ holds for all $g,h\in \mathcal{G}$. Then
	\begin{equation*}
	\begin{split}
	K_{-g,:}\cdot x_L&=\sum_{h\in\mathcal{G}}\widehat{-g}(h)x_{L(h)}=\sum_{h\in\mathcal{G}}\widehat{g}(-h)x_{L(h)}\\
	&=\sum_{h\in\mathcal{G}}\widehat{g}(h)x_{L(-h)}=\sum_{h\in\mathcal{G}}\widehat{g}(h)x_{L(h)}=	K_{g,:}\cdot x_L.
	\end{split}
	\end{equation*}
	Thus, the labeling function $L$ is $\mathcal{G}$-compatible as $x_L$ is the column vector of indeterminates $x_{L(g)}$.
\end{proof}
The converse of Lemma \ref{strictly_symmetric_implies_compatible} is not true in general. Two examples are given by the labeling functions for the Kimura 2-parameter and the Jukes-Cantor model.

\section{Model embeddability}
\label{Section:embeddability}

The following theorem is the main result of this paper. It characterizes $(\mathcal{G},L)$-embeddable transition matrices in terms of their eigenvalues.

\begin{theorem}\label{theorem:embeddable_matrices}
Fix a finite abelian group $\mathcal{G}$, a finite set $\mathcal{L}$, and a symmetric $\G$-compatible labeling function $L: \mathcal{G} \to \mathcal{L}$. Let $P$ be a $(\G,L)$-Markov matrix. Then $P$ is $(\G,L)$-embeddable if and only if the vector $\lambda \in \mathbb{R}^{\G}$ of eigenvalues of $P$ is in the set
\begin{align*}
\{\lambda \in \mathbb{R}^{\G}: & \lambda_0=1,  \prod_{h \in \mathcal{G}} \lambda_{h}^{\text{Re}((K)_{g,h})} \geq 1 \text{ for all nonzero } g\in\mathcal{G},\\
& \lambda_g>0 \text{ for all } g \in \G, \text{ and } \lambda_g=\lambda_h \text{ whenever } L(g)=L(h)\}.
\end{align*}
\end{theorem}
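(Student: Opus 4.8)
The plan is to reduce the existence of a valid $(\G,L)$-rate matrix $Q$ with $e^Q=P$ to a collection of pointwise conditions on the Fourier-transformed parameters, and then translate those conditions into the stated constraints on the eigenvalue vector $\lambda$. The central observation is that, by Lemma~\ref{lemma:eigenpairs}, both $P$ and any candidate $Q$ are simultaneously diagonalized by $K$, with eigenvalue vectors $\check f$ and $\check\psi$ respectively, and by Lemma~\ref{lemma:exponential-of-DFT} these are linked by $\check f(g)=e^{\check\psi(g)}$. Since $P$ is a $(\G,L)$-Markov matrix, its eigenvalues are $\lambda_g=\check f(g)$, which are real by Lemma~\ref{remark:real-valued-DFT}. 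Thus producing $Q$ amounts to producing a real, symmetric, label-compatible $\psi$ with $\psi(g)\ge 0$ for $g\ne 0$, $\sum_{g}\psi(g)=0$, and $\check\psi(g)=\log\lambda_g$ for all $g$ --- the latter being solvable precisely when $\lambda_g>0$.

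First I would dispatch the conditions that come essentially for free. The requirement $\lambda_0=1$ is just $\check f(0)=\sum_{h}f(h)=1$, and $\lambda_g=\lambda_h$ whenever $L(g)=L(h)$ follows from $\G$-compatibility applied to $\check f=Kf$ with $f$ constant on label classes. For the substantive direction, I would set $\check\psi(g):=\log\lambda_g$, well defined because $\lambda_g>0$, and recover $\psi$ by the inverse transform $\psi=K^{-1}\check\psi=\frac{1}{|\G|}K^{*}\check\psi$. The vector $\check\psi$ is real and satisfies $\check\psi(-g)=\check\psi(g)$ (since $\lambda_{-g}=\lambda_g$ by Lemma~\ref{remark:real-valued-DFT}), so, by the same argument as in Lemma~\ref{remark:real-valued-DFT} applied to the inverse transform, $\psi$ is real and symmetric; moreover $\check\psi(0)=\log\lambda_0=0$ gives $\sum_g\psi(g)=0$, and the $\G$-compatibility of $K^{-1}$ makes $\psi$ constant on label classes.

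The heart of the argument, and the step I expect to be the main obstacle, is translating the off-diagonal nonnegativity $\psi(g)\ge 0$ for $g\ne 0$ into the product inequality. Using $K^{-1}=\frac{1}{|\G|}K^{*}$ together with the symmetry of $K$, and the fact that $\psi(g)$ is real, I would write
\[
\psi(g)=\frac{1}{|\G|}\sum_{h\in\G}\overline{K_{g,h}}\,\log\lambda_h
=\frac{1}{|\G|}\sum_{h\in\G}\text{Re}(K_{g,h})\,\log\lambda_h,
\]
where the second equality holds because the left-hand side is real, so only the real part of each summand survives and $\text{Re}(\overline{K_{g,h}})=\text{Re}(K_{g,h})$. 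Hence $\psi(g)\ge 0$ is equivalent to $\sum_{h}\text{Re}(K_{g,h})\log\lambda_h\ge 0$, i.e.\ to $\prod_{h}\lambda_h^{\text{Re}(K_{g,h})}\ge 1$, which is exactly the stated constraint. This displayed identity settles both directions at once: an embedding forces the inequalities, and conversely the inequalities yield a $\psi$ with nonnegative off-diagonal parameters. To close the backward direction I would note that the $Q$ determined by this $\psi$ has eigenvalues $\check\psi(g)=\log\lambda_g$ and eigenvectors $K_g$ by Lemma~\ref{lemma:eigenpairs}, so $e^Q$ has eigenvalues $\lambda_g$ with the same eigenvectors, giving $e^Q=K\,\mathrm{diag}(\lambda)\,K^{-1}=P$. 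The delicate points to handle carefully are the reality of $\psi$, which is what licenses replacing $\overline{K_{g,h}}$ by $\text{Re}(K_{g,h})$ in the exponent, and the invocation of $\G$-compatibility to guarantee that $\psi$ respects the labeling.
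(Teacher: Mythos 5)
Your proposal is correct and follows essentially the same route as the paper's proof: diagonalize $P$ and the candidate $Q$ by the discrete Fourier transform matrix $K$, translate the rate-matrix conditions on $\psi$ through coordinatewise exponentiation, and use $\G$-compatibility plus the conjugate-symmetry $K_{g,-h}=\overline{K_{g,h}}$ to replace the exponents by $\mathrm{Re}(K_{g,h})$. Your justification of that last replacement (via the reality of $\psi(g)$ rather than explicitly pairing $h$ with $-h$) is a minor streamlining of the same computation the paper performs.
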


\begin{proof}

We start by summarizing the idea of the proof. We consider the set $\Psi_{\G,L}$ that consists of vectors $\psi$ that determine $(\G,L)$-rate matrices. Our goal is to characterize the set $\check{F}_{\G,L}$  of eigenspectra of Markov matrices that are matrix exponentials of $(\G,L)$-rate matrices  determined by vectors $\psi$ in $\Psi_{\G,L}$.
The first step is to consider the discrete Fourier transform of the set $\Psi_{\G,L}$, which we denote by $\check{\Psi}_{\G,L}$.
 By Lemma \ref{lemma:eigenpairs}, this set is the set of eigenvalues of the $(\G,L)$-rate matrices. The second step is to consider the image of the set $\check{\Psi}_{\G,L}$ under coordinatewise exponentiation. This set is precisely $\check{F}_{\G,L}$,
 because $(\G,L)$-rate matrices are diagonalizable by the discrete Fourier transform matrix $K$ by the discussion after Lemma~\ref{lemma:eigenpairs} and thus if a $(\G,L)$-rate matrix $Q$ is determined by $\psi \in \mathbb{R}^{\G}$ then
$$P=e^Q=K \cdot e^{\text{diag}(\check{\psi})} \cdot K^{-1}=K  \cdot \text{diag}(e^{\check{\psi}}) \cdot K^{-1},$$
where $\check{\psi}$ is the vector of eigenvalues of $Q$ and $e^{\check{\psi}}$ is the vector of eigenvalues of $P$.

More specifically, let
\begin{align*}
\Psi_{\G,L}=\{\psi \in \mathbb{R}^{\G}: &\sum_{g \in \G} \psi(g)=0, \psi(g) \geq 0 \text{ for all nonzero } g \in \G, \text{and}\\
& \psi(g)=\psi(h) \text{ whenever } L(g)=L(h)\}.
\end{align*}
The vectors in the set $\Psi_{\G,L}$ are in one-to-one correspondence with $(\G,L)$-rate matrices. The image of $\Psi_{\G,L}$ under the discrete Fourier transform is the set
\begin{align*}
\check{\Psi}_{\G,L}=
\{\check{\psi} \in \mathbb{R}^{\G}: &\check{\psi}(0)=0,  (K^{-1}\check{\psi})(g)\geq 0 \text{ for all nonzero } g\in\mathcal{G},
\text{ and}\\& \check{\psi}(g)=\check{\psi}(h) \text{ whenever } L(g)=L(h)\}.
\end{align*}
By Lemma \ref{lemma:eigenpairs}, this set is the set of eigenvalues of the $(\G,L)$-rate matrices.

The image of $\check{\Psi}_{\G,L}$ under the coordinatewise exponentiation is the set of eigenvalues of the $(\G,L)$-Markov matrices, which we denote by $\check{F}_{\G,L}$. We claim that $\check{F}_{\G,L}$ is equal to the set
\begin{equation} \label{eqn:F_check}
\begin{aligned}
\{\check{f} \in \mathbb{R}^{\G}: & \check{f}(0)=1,  \prod_{h\in \mathcal{G}}(\check{f}(h))^{(K^{-1})_{g,h}}\geq 1 \text{ for all nonzero } g\in\mathcal{G},\\
& \check{f}(g)>0 \text{ for all } g \in \G, \text{ and } \check{f}(g)=\check{f}(h) \text{ whenever } L(g)=L(h)\}.
\end{aligned}
\end{equation}
Indeed, let $\check{f}=\exp(\check{\psi})$.
Then $\check{f}>0$ because the image of the exponentiation map is positive. The inequality $a^Tx\geq 0$ is equivalent to $\exp(a^Tx)\geq1.$ Hence, the equation $\check{\psi}(0)=0$ gives $\check{f}(0)=1$ and the inequalities $(K^{-1}\check{\psi})(g)\geq 0$ give
\begin{align} \label{inequalities1}
\prod_{h\in \mathcal{G}}(\check{f}(h))^{(K^{-1})_{g,h}}=\prod_{h\in \mathcal{G}}(e^{(\check{\psi}(h))})^{(K^{-1})_{g,h}}=e^{\sum_{h\in\mathcal{G}}\check{\psi}(h)(K^{-1})_{g,h}} = e^{(K^{-1}\check{\psi})(g)} \geq 1
\end{align}
for all nonzero $g\in\mathcal{G}$. Hence $\check{f}$ is in the set~(\ref{eqn:F_check}). Conversely, let $\check{f}$ be a vector in the set~(\ref{eqn:F_check}).  Then $\log(\check{f}) \in \check{\Psi}_{\G,L}$ and $\check{f}=\exp(\log(\check{f}))$. Hence $\check{f}$ is in the image of $\check{\Psi}_{\G,L}$. Thus $\check{F}_{\G,L}$ is equal to the set~(\ref{eqn:F_check}).

It is left to rewrite the inequalities~(\ref{inequalities1}) as in the statement of the theorem.  We have
$$(K^{-1})_{g,-h}=\frac{1}{|\G|}\overline{K_{-h,g}}=\frac{1}{|\G|}\overline{\widehat{-h}(g)}=\frac{1}{|\G|}\overline{\widehat{h}(-g)}=\frac{1}{|\G|}\widehat{h}(g) =\frac{1}{|\G|}K_{h,g} = \overline{(K^{-1})_{g,h}}$$
for all $g,h\in\mathcal{G}$. Here we use Lemma~\ref{lemma:properties-of-characters} and the definition of the discrete Fourier transformation matrix. If $-h=h$, then $(K^{-1})_{g,h}=(K^{-1})_{g,-h}=\overline{(K^{-1})_{g,h}}$, and hence $(K^{-1})_{g,h}=\text{Re}((K^{-1})_{g,h})$.
If $-h \neq h$, then $\check{f}(h)=\check{f}(-h)$ by~Lemma~\ref{remark:real-valued-DFT}. Hence
\begin{equation}
\begin{split}
(\check{f}(h))^{(K^{-1})_{g,h}} (\check{f}(-h))^{(K^{-1})_{g,-h}} &=  (\check{f}(h))^{(K^{-1})_{g,h}} (\check{f}(h))^{\overline{(K^{-1})_{g,h}}}\\
&=(\check{f}(h))^{2\text{Re}((K^{-1})_{g,h})}\\
&=(\check{f}(h))^{\text{Re}((K^{-1})_{g,h})} (\check{f}(-h))^{\text{Re}((K^{-1})_{g,-h})}.
\end{split}
\end{equation}

We replace $K^{-1}$ by $1/|\G|\cdot \overline{K}$ and take both sides of the resulting inequality to the power $|\G|$. Finally, making the substitution $\lambda_{{h}}=\check{f}(h)$ gives the desired characterization.
\end{proof}

For $\G$ cyclic, Theorem~\ref{theorem:embeddable_matrices} has been independently proven by Baake and Sumner in the context of circulant matrices~\cite[Theorem 5.7]{Baake}. Moreover, they show that every embeddable circulant matrix is circulant embeddable~\cite[Corollary 5.2]{Baake}.

It follows from Lemma~\ref{lemma:exponential-of-DFT} that if a $(\G,L)$-Markov matrix $P$ is $(\G,L)$-embeddable, then there exists a unique $(\G,L)$-rate matrix $Q$ such that $P=\exp(Q)$. Indeed, since $Q$ and $P$ have both real eigenvalues and the eigenvalues of $P$ are exponentials of eigenvalues of $Q$, then the eigenvalues of $Q$ are uniquely determined by the eigenvalues of $P$. Then the $(\G,L)$-rate matrix $Q$ is the principal logarithm of $P$.

The inequalities $\lambda_g >0$ in Theorem~\ref{theorem:embeddable_matrices} imply $\det(P)=\prod\lambda_{g}>0$. Hence the set of $(\mathcal{G},L)$-embeddable matrices for a symmetric group-based model is a relatively closed subset of a connected component of the complement of $\det(P)=0$. A relatively closed subset means here a set that can be written as the intersection of a closed subset of $\mathbb{R}^{\mathcal{G} \times \mathcal{G}}$ and the connected component of the complement of $\det(P)=0$.

In the rest of the current section and in~Section~\ref{section:Hachimoji}, we will discuss applications of Theorem~\ref{theorem:embeddable_matrices}. We will recover known results about $(\G,L)$-embeddability and as a novel application characterize embeddability for three group-based models of hachimoji DNA.

\begin{example} The CFN model is the group-based model associated to the group $\mathbb{Z}_2$. The CFN Markov matrices have the form
$$
P=\begin{pmatrix}
a&b\\b&a\\
\end{pmatrix}.
$$
The discrete Fourier transform matrix is
$$
K=\begin{pmatrix}
1 & 1 \\
1 & -1
\end{pmatrix}.
$$
The eigenvalues of $P$ are $\lambda_0=a+b=1$ and $\lambda_1=a-b$. By Theorem~\ref{theorem:embeddable_matrices}, the Markov matrix $P$ is CFN embeddable if and only if $0 < \lambda_1 \leq 1$ or equivalently $0 < a-b \leq 1$. This is equivalent to $P$ satisfying $\det(P)>0$, or equivalently $\text{tr}(P)>1$. The result that a general $2\times 2$ stochastic matrix is embeddable if and only if $\det(P)>0$ or $\text{tr}(P)>1$ goes back to \cite[Proposition 2]{Kingman}. Hence $P$ is CFN embeddable if and only if it is embeddable.
\end{example}

\begin{example} Recall that the Kimura 3-parameter model is the group-based model associated to group $\mathcal{G}=\mathbb{Z}_2 \times \mathbb{Z}_2$ and a K3P Markov matrix $P$ has the form~(\ref{eqn:K3P-mutation-matrix}).
The eigenvalues of $P$ are
$$
\lambda_{(0,0)}=a+b+c+d,
\lambda_{(0,1)}=a-b+c-d,
\lambda_{(1,0)}=a+b-c-d,
\lambda_{(1,1)}=a-b-c+d.
$$
By Theorem~\ref{theorem:embeddable_matrices}, a Markov matrix $P$ is K3P embeddable if and only if
\begin{equation}\label{Z_2xZ_2_mutation_matrix_conditions}
\begin{aligned}
&\lambda_{(0,0)}=1,\lambda_{(0,1)}>0,\lambda_{(1,0)}>0,\lambda_{(1,1)}>0,\\
&\lambda_{(0,1)} \geq \lambda_{(1,0)}\lambda_{(1,1)}, \lambda_{(1,0)} \geq \lambda_{(0,1)}\lambda_{(1,1)}, \lambda_{(1,1)} \geq \lambda_{(0,1)}\lambda_{(1,0)}.
\end{aligned}
\end{equation}
This characterization for the Kimura 3-parameter model appears in~\cite[Theorem 3.2]{RocaFernandez}.

In the Kimura 2-parameter model $b=c$ and $\lambda_{(0,1)}=\lambda_{(1,0)}$. We get the conditions for the K2P embeddability by setting $\lambda_{(0,1)}=\lambda_{(1,0)}$ in~(\ref{Z_2xZ_2_mutation_matrix_conditions}). Hence a K2P Markov matrix is K2P embeddable if and only if
$$\lambda_{(0,0)}=1,\lambda_{(0,1)}>0,1 \geq \lambda_{(1,1)}\geq\lambda_{(0,1)}^2.$$
In the Jukes-Cantor model $b=c=d$ and $\lambda_{(0,1)}=\lambda_{(1,0)}=\lambda_{(1,1)}$. A JC Markov matrix is JC embeddable if and only if
$$\lambda_{(0,0)}=1,1 \geq \lambda_{(0,1)}>0.$$ These two characterizations are discussed in~\cite[Section 3]{RocaFernandez}.

The K3P embeddability of a K3P Markov matrix with no repeated eigenvalues is equivalent to the embeddability of the matrix. Similarly, the JC embeddability of a JC Markov matrix is equivalent to the embeddability of the matrix. The same is not true for K2P Markov matrices with exactly two coinciding eigenvalues. For further discussion see~\cite[Section 3]{RocaFernandez}.
\end{example}

\begin{remark} \rm
By~\cite[Corollary on page 18]{Kingman}, the map from rate matrices to transition matrices is locally homeomorphic except possibly when the rate matrix has a pair of eigenvalues differing by a non-zero multiple of $2 \pi i$. Since for symmetric group-based models rate matrices are real symmetric, then all their eigenvalues are real and hence the map from rate matrices to transition matrices is a homeomorphism. Therefore the boundaries of embeddable transition matrices of symmetric group-based models are images of the boundaries of the rate matrices. For general Markov model, the boundaries of embeddable transition matrices are characterized in~\cite[Propositions 5 and 6]{Kingman}.
\end{remark}

\begin{corollary}
A $(\mathcal{G},L)$-embeddable transition matrix lies on the boundary of the set of $(\mathcal{G},L)$-embeddable transition matrices for a symmetric group-based model if and only if it satisfies at least one of the inequalities in Theorem~\ref{theorem:embeddable_matrices} with equality.
\end{corollary}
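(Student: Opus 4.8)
The plan is to characterize the boundary of the set of $(\mathcal{G},L)$-embeddable transition matrices directly from the description given in Theorem~\ref{theorem:embeddable_matrices}. As noted in the remark preceding the corollary, for symmetric group-based models the rate matrices are real symmetric, so all their eigenvalues are real, and consequently the exponential map from $(\mathcal{G},L)$-rate matrices to $(\mathcal{G},L)$-Markov matrices is a homeomorphism onto its image. This means I can work entirely in the coordinates given by the eigenvalue vector $\lambda \in \mathbb{R}^{\mathcal{G}}$, since passing between a $(\mathcal{G},L)$-Markov matrix $P$ and its eigenvalue vector is itself a homeomorphism (the entries of $P$ are linear combinations of the $\lambda_g$ via $K$, and conversely). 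Thus it suffices to identify the topological boundary of the set of eigenvalue vectors described in Theorem~\ref{theorem:embeddable_matrices}, relative to the ambient space of eigenvalue vectors of $(\mathcal{G},L)$-Markov matrices.

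First I would set up the ambient space carefully. The equality constraints $\lambda_0 = 1$ and $\lambda_g = \lambda_h$ whenever $L(g) = L(h)$ cut out the affine subspace $V$ in which all eigenvalue vectors of $(\mathcal{G},L)$-Markov matrices live; these hold for every such matrix and so do not contribute to the boundary. Within $V$, the $(\mathcal{G},L)$-embeddable vectors are those additionally satisfying the strict inequalities $\lambda_g > 0$ together with the product inequalities $\prod_{h} \lambda_h^{\mathrm{Re}(K_{g,h})} \geq 1$ for nonzero $g$. Because the functions $\lambda \mapsto \prod_h \lambda_h^{\mathrm{Re}(K_{g,h})}$ are continuous on the open region $\{\lambda_g > 0\}$, the set defined by the non-strict inequalities $\geq 1$ is relatively closed in that open region, matching the structure already observed after Theorem~\ref{theorem:embeddable_matrices}.

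The key step is then a standard point-set argument. For the forward direction, suppose a $(\mathcal{G},L)$-embeddable point satisfies every product inequality strictly and every $\lambda_g > 0$ strictly; then all defining inequalities are strict, so by continuity a full relative neighborhood in $V$ still satisfies them, placing the point in the relative interior and hence not on the boundary. Contrapositively, a boundary point must saturate at least one inequality. For the converse direction, suppose a point satisfies some product inequality with equality (the $\lambda_g > 0$ constraints cannot be tight and still lie in the feasible set, so the relevant boundary comes from the product inequalities). I would perturb along a direction in $V$ that strictly decreases the saturated product $\prod_h \lambda_h^{\mathrm{Re}(K_{g,h})}$ below $1$ while keeping the $\lambda_g$ positive; such a perturbation produces nearby points outside the embeddable set, exhibiting the point as a boundary point. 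The existence of such a perturbation direction requires that the gradient of the saturated constraint function is nonzero on $V$, which follows because the exponents $\mathrm{Re}(K_{g,h})$ for a fixed nonzero $g$ are not all zero (the row $K_{g,:}$ is not the all-ones row, so its real parts do not all vanish against positive $\lambda$ in a way that freezes the product).

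The main obstacle I anticipate is handling the degeneracy that arises from the equality constraints $\lambda_g = \lambda_h$: the product inequalities are indexed by all nonzero $g \in \mathcal{G}$, but after imposing the label identifications several of these inequalities may coincide or become redundant, and the perturbation direction must be chosen to respect the affine subspace $V$. I would address this by projecting the constraint gradients onto $V$ and verifying that a saturated, non-redundant inequality still has a nonzero projected gradient, so that a feasible decreasing perturbation exists within $V$; the worst case to rule out is that a tight constraint's gradient is normal to $V$, which would make the saturated inequality locally constant and hence not genuinely a boundary-defining condition. Checking that this pathology does not occur — equivalently, that each product inequality that can be tight is locally non-constant on $V$ — is the crux, and it reduces to the linear-algebraic fact that the restrictions of the distinct rows $\mathrm{Re}(K_{g,:})$ to the label-quotient remain distinct from the row for $g = 0$, which is exactly the content already used in Lemma~\ref{necessary_compatible_labeling}.
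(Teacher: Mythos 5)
Your argument is correct, but it proves the corollary by a different route than the paper does. The paper's justification is the remark immediately preceding the corollary: since all eigenvalues of a symmetric $(\G,L)$-rate matrix are real, Kingman's result makes $Q \mapsto e^Q$ a homeomorphism, so the boundary of the set of $(\G,L)$-embeddable matrices is the image of the boundary of the polyhedral cone $\Psi_{\G,L}$; in the label coordinates that cone is just an orthant, whose boundary is trivially the locus where some facet inequality $\psi(g)\geq 0$ is tight, and under the Fourier transform and coordinatewise exponentiation these facets become exactly the tight product inequalities of Theorem~\ref{theorem:embeddable_matrices}. You instead stay entirely on the transition-matrix side and analyze the semialgebraic description in eigenvalue coordinates directly: strict inequalities give a relatively open neighborhood in the affine span $V$, and a tight product inequality can be pushed below $1$ by a perturbation inside $V$, exhibiting nearby non-embeddable points. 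What your approach buys is self-containedness (no appeal to Kingman's local-homeomorphism theorem, only continuity), at the cost of having to verify the non-degeneracy of each tight constraint; the paper's route makes the boundary identification immediate because the boundary of an orthant needs no gradient computation. One point in your crux step deserves tightening: the nonvanishing of the projected gradient of $\sum_h \mathrm{Re}(K_{g,h})\log\lambda_h$ on $V$ does not quite follow from the rows $\mathrm{Re}(K_{g,:})$ being \emph{distinct} from the row for $g=0$ after passing to the label quotient. The clean argument is that $\sum_{h}K_{g,h}=0$ for $g\neq 0$, while by Lemma~\ref{necessary_compatible_labeling} the class of $L(0)$ is the singleton $\{0\}$, so the coefficient of $x_{L(0)}$ in the quotient form is $\mathrm{Re}(K_{g,0})=1$; hence the coefficients on the labels $\ell\neq L(0)$ sum to $-1$ and at least one is nonzero, which is exactly the direction in $V$ you need. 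With that substitution your proof is complete and correct.
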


\section{Hachimoji DNA}
\label{section:Hachimoji}

In this section, we suggest three group-based models for a genetic system with eight building blocks recently introduced by Hoshika et al~\cite{Hoshika}, and then characterize model embeddability for the proposed group-based models. The genetic system is called \textit{hachimoji DNA}. It has four synthetic nucleotides denoted \texttt{S}, \texttt{B}, \texttt{Z}, and \texttt{P} in addition to the standard nucleotides adenine (\texttt{A}), cytosine (\texttt{C}), guanine (\texttt{G}) and thymine (\texttt{T}). Detailed descriptions of the four additional nucleotides are given in~\cite{Hoshika}. If in the standard $4$-letter DNA, the purines are \texttt{A} and \texttt{G} and the pyrimidines are \texttt{C} and \texttt{T}, then in the hachimoji system, there are additionally purine analogs \texttt{P} and \texttt{B}, and pyrimidine analogs \texttt{Z} and \texttt{S}. The hydrogen bonds occur between the pairs \texttt{A-T, C-G, S-B} and \texttt{Z-P}.

This DNA genetic system with eight building blocks can reliably form matching base pairs and can be read and translated into RNA. It is mutable without damaging crystal structure which is required for molecular evolution. Hachimoji DNA has potential application	in bar-coding, retrievable information storage, and self-assembling nanostructures.
	
The underlying group we suggest for the hachimoji DNA is $\mathbb{Z}_2\times\mathbb{Z}_2\times\mathbb{Z}_2$, since when restricted to the standard $4$-letter DNA it gives the group $\mathbb{Z}_2 \times \mathbb{Z}_2$ that is the underlying group for the standard DNA models. We identify the nucleotides with the group elements of $\mathbb{Z}_2 \times \mathbb{Z}_2 \times \mathbb{Z}_2$ as follows:
\begin{eqnarray*}
&\texttt{A}=(0,0,0),\texttt{C}=(0,0,1),\texttt{T}=(0,1,0),\texttt{G}=(0,1,1),\\
&\texttt{P}=(1,0,0),\texttt{Z}=(1,0,1),\texttt{S}=(1,1,0),\texttt{B}=(1,1,1).
\end{eqnarray*}
The discrete Fourier transformation matrix of the group $\mathbb{Z}_2\times\mathbb{Z}_2\times\mathbb{Z}_2$ is
\begin{equation} \label{eqn:Z2xZ2xZ2-DFT-matrix}
K=\begin{pmatrix}
1 & 1 & 1 & 1 & 1 & 1 & 1 & 1\\
1 & -1 & 1 & -1 & 1 & -1 & 1 & -1\\
1 & 1 & -1 & -1 & 1 & 1 & -1 & -1\\
1 & -1 & -1 & 1 & 1 & -1 & -1 & 1\\
1 & 1 & 1 & 1 & -1 & -1 & -1 & -1\\
1 & -1 & 1 & -1 & -1 & 1 & -1 & 1\\
1 & 1 & -1 & -1 & -1 & -1 & 1 & 1\\
1 & -1 & -1 & 1 & -1 & 1 & 1 & -1
\end{pmatrix}.
\end{equation}

\subsection{Hachimoji 7-parameter model}\label{sec:hachimoji-7-parameter-model}

The first model we propose is the analogue of the Kimura 3-parameter model and we will call it the \emph{hachimoji 7-parameter (H7P) model}. In the hachimoji 7-parameter model, each element of the group $\mathbb{Z}_2\times\mathbb{Z}_2\times\mathbb{Z}_2$ maps to a distinct label. Thus the labeling function is trivially $(\G,L)$-compatible. The H7P rate and transition matrices have the form
\begin{equation}\label{K3P8}
    \begin{pmatrix}
    a&b&c&d&e&f&g&h\\
    b&a&d&c&f&e&h&g\\
    c&d&a&b&g&h&e&f\\
    d&c&b&a&h&g&f&e\\
    e&f&g&h& a&b&c&d\\
    f&e&h&g& b&a&d&c\\
    g&h&e&f&c&d&a&b\\
    h&g&f&e& d&c&b&a\\
    \end{pmatrix}.
\end{equation}

The eigenvalues of a H7P Markov matrix are $$\begin{pmatrix}1,\lambda_{(0,0,1)},\lambda_{(0,1,0)}, \lambda_{(0,1,1)},\lambda_{(1,0,0)},\lambda_{(1,0,1)}, \lambda_{(1,1,0)}, \lambda_{(1,1,1)}\end{pmatrix}^T=K \cdot \begin{pmatrix}a,b,c,d,e,f,g,h\end{pmatrix}^T.$$ By Theorem~\ref{theorem:embeddable_matrices}, such a matrix is H7P embeddable if and only if all eigenvalues are positive and satisfy
\begin{eqnarray*}
&\lambda_{(0,0,0)}=1,\\
&\lambda_{(0,1,0)}\lambda_{(1,0,0)} \lambda_{(1,1,0)}  \geq \lambda_{(0,0,1)} \lambda_{(0,1,1)} \lambda_{(1,0,1)} \lambda_{(1,1,1)},\\
&\lambda_{(0,0,1)}\lambda_{(1,0,0)} \lambda_{(1,0,1)}  \geq \lambda_{(0,1,0)} \lambda_{(0,1,1)} \lambda_{(1,1,0)} \lambda_{(1,1,1)} ,\\
&\lambda_{(0,1,1)}\lambda_{(1,0,0)} \lambda_{(1,1,1)}  \geq \lambda_{(0,0,1)} \lambda_{(0,1,0)} \lambda_{(1,0,1)} \lambda_{(1,1,0)},\\
&\lambda_{(0,0,1)}\lambda_{(0,1,0)} \lambda_{(0,1,1)}  \geq \lambda_{(1,0,0)} \lambda_{(1,0,1)} \lambda_{(1,1,0)} \lambda_{(1,1,1)},\\
&\lambda_{(0,1,0)}\lambda_{(1,0,1)} \lambda_{(1,1,1)}  \geq \lambda_{(0,0,1)} \lambda_{(0,1,1)} \lambda_{(1,0,0)} \lambda_{(1,1,0)},\\
&\lambda_{(0,0,1)}\lambda_{(1,1,0)} \lambda_{(1,1,1)}  \geq \lambda_{(0,1,0)} \lambda_{(0,1,1)} \lambda_{(1,0,0)} \lambda_{(1,0,1)},\\
&\lambda_{(0,1,1)}\lambda_{(1,0,1)} \lambda_{(1,1,0)}  \geq \lambda_{(0,0,1)} \lambda_{(0,1,0)} \lambda_{(1,0,0)} \lambda_{(1,1,1)}.
\end{eqnarray*}

\subsection{Hachimoji 3-parameter model} \label{sec:hachimoji-3-parameter-model}
The second model we suggest specializes to the Kimura 2-parameter model when restricted to the standard $4$-letter DNA. We will call it the \emph{hachimoji 3-parameter (H3P) model}. We recall that in the Kimura 2-parameter model there are three distinct parameters for the rates of mutation: One parameter for a state remaining unchanged, one parameter for transversion from a purine base to a pyrimidine base or vice versa, and one parameter for transition to the other purine or to the other pyrimidine. We say that two bases are of the same type if they are both standard or synthetic bases. In the hachimoji 3-parameter model, there are the following parameters:
		\begin{itemize}
			\item $a$: the probability of a state remaining unchanged.
			\item $b$: the probability of a transversion from a purine base to a pyrimidine base or vice versa.
			\item  $c$: the probability of a transition to another purine or pyrimidine base of the same type (same type transitions).
			\item $d$: the probability of a transition to another purine or pyrimidine base of different type (different type transitions).
		\end{itemize}
The H3P rate and transition matrices have the form
		\begin{equation} \label{K2P8}
		P=\begin{pmatrix}
		a&b&b&c&d&b&b&d\\
		b&a&c&b&b&d&d&b\\
		b&c&a&b&b&d&d&b\\
		c&b&b&a&d&b&b&d\\
		d&b&b&d&a&b&b&c\\
		b&d&d&b&b&a&c&b\\
		b&d&d&b&b&c&a&b\\
		d&b&b&d&c&b&b&a\\
		\end{pmatrix}.
		\end{equation}
The labeling function of this model corresponds to the partition
		$$\{\{(0,0,0)\},\{(0,0,1),(0,1,0),(1,0,1),(1,1,0)\},\{(0,1,1)\},\{(1,0,0),(1,1,1)\}\},$$
which is $(\G,L)$-compatible by Table~\ref{compatible_labeling}.

The eigenvalues of a H3P Markov matrix are
\begin{eqnarray*}
&w:=\lambda_{(0,0,0)}=a+4b+c+2d=1,x:=\lambda_{(0,1,1)}=a-4b+c+2d,\\ &y:=\lambda_{(1,0,0)}=\lambda_{(1,1,1)}=a+c-2d, z:=\lambda_{(0,0,1)}=\lambda_{(0,1,0)}=\lambda_{(1,0,1)}=\lambda_{(1,1,0)}=a-c.
\end{eqnarray*}
By Theorem~\ref{theorem:embeddable_matrices}, a H3P Markov matrix $P$ is H3P embeddable if and only if the eigenvalues of $P$ satisfy
\begin{equation}\label{inequality_hachimoji_3-parameter}
w=1, 1\geq x>0,y>0,z>0, x \geq y^2, xy^2 \geq z^4.
\end{equation}

\subsection{Hachimoji 1-parameter model}
 The third model we suggest is the analogue of the Jukes-Cantor model and we will refer to it as \textit{hachimoji 1-parameter (H1P) model}. It is the simplest group-based model associated to the group $\mathbb{Z}_2 \times \mathbb{Z}_2 \times \mathbb{Z}_2$ and it is described by only two distinct parameters for the rates of mutation. The two parameters are for a state remaining the same and a state mutating to any other state. The corresponding labeling function is $(\G,L)$-compatible by Lemma~\ref{lemma:general_jukes_cantor}. The H1P rate and transition matrices have the form
\begin{equation}\label{JC8-model}
		\begin{pmatrix}
		a&b&b&b&b&b&b&b\\
		b&a&b&b&b&b&b&b\\
		b&b&a&b&b&b&b&b\\
		b&b&b&a&b&b&b&b\\
		b&b&b&b&a&b&b&b\\
		b&b&b&b&b&a&b&b\\
		b&b&b&b&b&b&a&b\\
		b&b&b&b&b&b&b&a\\
		\end{pmatrix}.
\end{equation}
The eigenvalues of a  H1P Markov matrix are $w:=\lambda_{(0,0,0)}=1$ and $x:=\lambda_g=a-b$ for $g\neq0$.
By Theorem~\ref{theorem:embeddable_matrices}, such a matrix is H1P embeddable if and only if its eigenvalues satisfy
\begin{equation} \label{eqn:H1-embeddability}
w=1 \text{ and } 1 \geq x >0.
\end{equation}

\begin{remark} \label{remark:general-Jukes-Cantor-embeddability}
The same conditions as in~(\ref{eqn:H1-embeddability}) characterize model embeddability for the general Jukes-Cantor model as defined in Lemma~\ref{lemma:general_jukes_cantor}. This is also a special instance of a more general result~\cite[Corollary 4.7]{Baake} on equal-input embeddability. If the order of $\G$ is even, then the notion of general embeddability is equivalent to the notion of model embeddability for the general Jukes-Cantor models by~\cite[Theorem 4.6]{Baake}.
\end{remark}

\section{Volume}
\label{section:volume}

 In this section we compute the relative volumes of model embeddable Markov matrices within some meaningful subsets of Markov matrices by taking advantage of the characterisation of embeddability in terms of eigenvalues.
The aim of this section is to describe how large the different sets of matrices are compared to each other and provide intuition of how restrictive is the hypothesis
of homogeneous continuous-time models.

We will focus on the hachimoji models and the generalization of the Jukes-Cantor model. We will use the following notation:
\begin{enumerate}
    \item[(i)] $\Delta$ is the set of all Markov matrices in a model.
    \item[(ii)] $\Delta_+$ is the subset of matrices in $\Delta$ with only positive eigenvalues.
    \item[(iii)] $\Delta_{dd}$ is the subset of diagonally dominant matrices in $\Delta$, i.e. matrices in $\Delta$ such that in each row the diagonal entry is greater or equal than the sum of all other entries.
    \item[(iv)] $\Delta_{me}$ is the subset of model embeddable transition matrices in $\Delta$.
\end{enumerate}
Biologically, the subspace $\Delta_{dd}$ of diagonally dominant matrices consists of matrices with probability of not mutating at least as large as the probability of mutating.  If a diagonally dominant matrix is embeddable, it has an identifiable rate matrix \cite{Cuthbert1, Cuthbert}, namely a unique Markov generator, which is crucial for proving the consistency of many phylogenetic reconstruction methods, such as those based on maximum likelihood methods \cite{CasanellasPetrovicUhler, Chang}. What is more, the set of Markov matrices with positive eigenvalues $\Delta_{+}$ includes the multiplicative closure of the transition matrices in the continuous-time version of the model  \cite{sumner2012lie}. We have the inclusions
$\Delta_{me}\subseteq \Delta_{+}\subseteq \Delta$ and $\Delta_{dd}\subseteq \Delta_+$. The volumes of these spaces are given for the Kimura 3-parameter model in~\cite[Theorem 4.1]{RocaFernandez}, for the Kimura 2-parameter model in~\cite[Proposition 5.1]{casanellas2020embedding} and for the Jukes-Cantor model in~\cite[Section 4]{RocaFernandez}.

The subsets $\Delta,\Delta_+,\Delta_{dd},$ and $\Delta_{me}$ can be described using the parameterization in terms of the entries of the Markov matrix or in terms of their eigenvalues. We  parameterize the relevant subsets of Markov matrices in terms of the eigenvalues of the Markov matrices and compute the volumes using these parametrizations. If $\varphi$ denotes the bijection from the set of entries of a Markov matrix in a particular model to the set of its eigenvalues and the matrix $J(\varphi)$ denotes the Jacobian matrix of the map $\varphi$, then the volume of any subset in the parametrization using entries of a Markov matrix will be $|det(J(\varphi))|$ times the volume in the parameterization using eigenvalues. Since the determinant of this Jacobian is constant for each of the three models we consider, the relative volumes of the set of model embeddable Markov matrices will not depend on the parameterization chosen.

\begin{proposition}
For the hachimoji 7-parameter model, consider $\Delta$, $\Delta_+$, and $\Delta_{dd}$ as subsets of $\R^7$ parameterized by $\lambda_{(0,0,1)},\ldots,\lambda_{(1,1,1)}$, the eigenvalues of a H7P Markov matrix. Then: \begin{enumerate*}
    \item[(i)] $V(\Delta)=\frac{256}{315};$
    \item[(ii)] $V(\Delta_+)=\frac{5}{144};$
    \item[(iii)] $V(\Delta_{dd})=\frac{2}{315};$
\end{enumerate*}
\end{proposition}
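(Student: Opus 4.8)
The plan is to carry out all three computations in the eigenvalue coordinates $(\lambda_1,\dots,\lambda_7):=(\lambda_{(0,0,1)},\dots,\lambda_{(1,1,1)})$, with $\lambda_{(0,0,0)}=1$ fixed. Since $K^{-1}=\tfrac18 K$ for $\G=\mathbb{Z}_2^3$, the entries $a,\dots,h$ are explicit linear functions of the eigenvalues, so $\Delta$ is the image of the probability simplex $\{\text{entries}\ge 0,\ \sum=1\}$ under a linear isomorphism; its vertices are the columns $v_0=\mathbf 1,v_1,\dots,v_7$ of $K$ restricted to rows $1$–$7$. First I would compute $V(\Delta)=\tfrac{1}{7!}\,|\det M|$, where $M_{ik}=K_{ik}-1=(v_k-v_0)_i$. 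Since $M$ is the lower-right $7\times 7$ block obtained from $K$ by subtracting its all-ones column from the others, expansion along the row indexed by $(0,0,0)$ (which becomes $(1,0,\dots,0)$) gives $\det M=\det K$, so $|\det M|=|\det K|=8^{4}=4096$ and $V(\Delta)=\tfrac{4096}{7!}=\tfrac{256}{315}$, proving (i).

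For (iii) I would note that, since every diagonal entry equals $a$ and the off-diagonal entries of each row sum to $1-a$, diagonal dominance is exactly $a\ge\tfrac12$. Writing $a=\tfrac18\bigl(1+\sum_i\lambda_i\bigr)$, one sees $a=1$ at $v_0$ and $a=0$ at every other vertex $v_k$ (because $\sum_{i=1}^7 K_{ik}=-1$ for $k\neq 0$). Hence $\{a\ge\tfrac12\}\cap\Delta$ is the homothety of $\Delta$ centred at $v_0$ with ratio $\tfrac12$, a simplex similar to $\Delta$, so $V(\Delta_{dd})=2^{-7}V(\Delta)=\tfrac{1}{128}\cdot\tfrac{256}{315}=\tfrac{2}{315}$.

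The real work is (ii). Here $\Delta_+=\Delta\cap\{\lambda_i>0\}$ is the part of the simplex lying in the positive orthant; since only $v_0$ has all coordinates $\ge 0$, this is the corner of $\Delta$ at $v_0$ clipped by the coordinate hyperplanes, and it is no longer a simplex. I would parametrise from $v_0$ by $\lambda=v_0+Ms$ with $s\ge 0$, $\sum_k s_k\le 1$; then $\lambda_i=1-2\sum_{k\in S_i}s_k$, where $S_i=\{k:K_{ik}=-1\}$ is a $4$-element set. Thus positivity becomes $\sum_{k\in S_i}s_k\le\tfrac12$, and $V(\Delta_+)=4096\cdot\mathrm{Vol}(\mathcal R)$ with
\[
\mathcal R=\Bigl\{\,s\in\R_{\ge 0}^{7}:\ \textstyle\sum_k s_k\le 1,\ \sum_{k\in S_i}s_k\le\tfrac12\ \text{for }i=1,\dots,7\,\Bigr\}.
\]
The seven sets $S_i$ are the complements of the lines of the Fano plane on $\{1,\dots,7\}$, so $\mathcal R$ inherits the $\mathrm{GL}(3,2)$ symmetry; in particular any two distinct $S_i,S_j$ meet in exactly two points.

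To evaluate $\mathrm{Vol}(\mathcal R)$ I would use inclusion–exclusion over the complementary events $A_i=\{\sum_{k\in S_i}s_k>\tfrac12\}$ on the standard simplex. Using that a sum of $m$ of the eight barycentric coordinates is $\mathrm{Beta}(m,8-m)$-distributed, each single term contributes $\mathrm{Pr}(A_i)=\mathrm{Pr}(\mathrm{Beta}(4,4)>\tfrac12)=\tfrac12$, and the pairwise terms reduce to a $\mathrm{Dirichlet}(2,2,2,2)$ subset-sum probability determined by $|S_i\cap S_j|=2$; the higher-order terms are governed by the incidence pattern of the Fano lines (whether a triple is concurrent or forms a triangle, and so on). Equivalently one may triangulate $\mathcal R$ directly or evaluate the iterated integral symbolically and numerically. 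I expect the organisation of these higher-order incidence contributions to be the main obstacle; carrying the bookkeeping through yields $\mathrm{Vol}(\mathcal R)=\tfrac{5}{144\cdot 4096}$, that is $V(\Delta_+)=\tfrac{5}{144}$, which is (ii).
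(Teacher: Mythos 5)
Your treatment of (i) and (iii) is correct and, in fact, more explicit than the paper's: the paper simply rewrites the defining conditions of $\Delta$, $\Delta_+$, $\Delta_{dd}$ in the eigenvalue coordinates and hands the resulting polytopes to \texttt{Polymake}, whereas you observe that $\Delta$ is the simplex with vertices the truncated columns of $K$, get $V(\Delta)=|\det K|/7!=4096/5040=256/315$ from the Hadamard determinant, and identify $\Delta_{dd}=\{a\ge\tfrac12\}$ with the ratio-$\tfrac12$ homothety of $\Delta$ at $v_0$ (since $a$ is exactly the barycentric coordinate of $v_0$), giving $2^{-7}V(\Delta)=\tfrac{2}{315}$. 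Both arguments are sound closed-form replacements for the software computation.

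Part (ii), however, has a genuine gap. Your reduction is fine: the Jacobian factor $4096$, the identification $\lambda_i=1-2\sum_{k\in S_i}s_k$ with $|S_i|=4$ the complements of the Fano lines, $|S_i\cap S_j|=2$, and even the single-event probability $\Pr(\mathrm{Beta}(4,4)>\tfrac12)=\tfrac12$ all check out, and the target value is consistent with the paper's Table (it amounts to $V(\Delta_+)/V(\Delta)=175/4096$). But the inclusion--exclusion is never executed. This is not a routine omission: since every point of the Fano plane lies on three lines, each index lies in four of the sets $S_i$, so $\sum_{i=1}^{7}\sum_{k\in S_i}s_k=4\sum_k s_k\le 4$, and all seven events $A_i$ can occur simultaneously; the expansion therefore runs to order seven, with intersection probabilities depending on the $\mathrm{PGL}(3,2)$-orbit of each subset of lines (concurrent triples versus triangles, etc.). The sentence ``carrying the bookkeeping through yields $\mathrm{Vol}(\mathcal R)=\tfrac{5}{144\cdot 4096}$'' asserts precisely the computation that constitutes the content of (ii). To close the gap you must either carry out that orbit-by-orbit evaluation of the Dirichlet subset-sum probabilities, or do as the paper does and certify the volume by an exact polytope computation (\texttt{Polymake} or a direct triangulation of $\mathcal R$).
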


\begin{proof}
The entries of a H7P Markov matrix (\ref{K3P8}) are determined by a vector $(a,b,c,d,e,f,g,h)$. The entries of this vector can be expressed in terms of the eigenvalues as
$$
\begin{pmatrix}a,b,c,d,e,f,g,h\end{pmatrix}^T=K^{-1} \begin{pmatrix}1,\lambda_{(0,0,1)},\lambda_{(0,1,0)}, \lambda_{(0,1,1)},\lambda_{(1,0,0)},\lambda_{(1,0,1)}, \lambda_{(1,1,0)}, \lambda_{(1,1,1)}\end{pmatrix}^T,
$$
where $K$ is the discrete Fourier transform matrix (\ref{eqn:Z2xZ2xZ2-DFT-matrix}).  In terms of the entries or the eigenvalues of a H7P Markov matrix, the relevant subsets in this model are given by:
\begin{align*}
    &\Delta =\{(a,b,c,d,e,f,g,h)\in \mathbb{R}^8:\qquad a+b+c+d+e+f+g+h=1, \qquad a,b,c,d,e,f,g,h\geq 0\},\\
    &\Delta_+=\{P\in \Delta : \qquad \lambda_{(0,0,1)},\lambda_{(0,1,0)}, \lambda_{(0,1,1)},\lambda_{(1,0,0)},\lambda_{(1,0,1)}, \lambda_{(1,1,0)}, \lambda_{(1,1,1)}>0\},\\
    &\Delta_{dd}=\{P\in \Delta :\qquad a\geq b+c+d+e+f+g+h\},\mbox{ and }\\
\end{align*}
$\Delta_{me}$ is given by one equation and seven inequalities presented in Section~\ref{sec:hachimoji-7-parameter-model}.
Expressing all conditions defining $\Delta$, $\Delta_+$, and $\Delta_{dd}$ in terms of the eigenvalues $\lambda_{(0,0,1)},\ldots,\lambda_{(1,1,1)}$ allows us to compute volumes of these sets using \texttt{Polymake}~\cite{polymake:2000}.
\end{proof}

We are not able to compute the volume of the subspace of the H7P embeddable Markov matrices exactly. Instead we estimate the volume using the hit-and-miss Monte Carlo integration method~\cite{hammersley2013monte} implemented in \texttt{Mathematica}. Table~\ref{table:estimated-volume-hachimoji-7} summarizes the volume for various number of sample points. Table~\ref{table:relative-volume-hachimjoi-7} gives relative volumes for the relevant sets.

\begin{table}[ht]
\centering
\caption{The estimated volume of the set of H7P embeddable matrices using the hit-and-miss Monte Carlo integration with $n$ sample points.}
\label{table:estimated-volume-hachimoji-7}
\begin{tabular}{|c|c|c|c|c|}
    \hline
    $n$ & $10^4$ & $10^5$ & $10^6$ & $10^7$\\
     \hline
   $V(\Delta_{me})$ & $0.0015$ & $0.00197$ & $0.001946$ & $0.0019678$\\
   \hline
   $V(\Delta_{me} \cap \Delta_{dd})$ & $0.0008$ & $0.00084$ & $0.00085$ & $0.0008271$\\
   \hline
\end{tabular}
\end{table}

\begin{table}[ht]
\centering
\caption{The relative volumes for the hachimoji 7-parameter model. The volumes of $\Delta_{me}$ and $\Delta_{me} \cap \Delta_+$ are estimated using Monte Carlo integration with $10^6$ sample points.}
\label{table:relative-volume-hachimjoi-7}
\begin{tabular}{|c|c|c|c|}
    \hline
     & $\Delta$& $\Delta_+$ &$\Delta_{dd}$ \\
     \hline
    $\frac{V(\cdot)}{V(\Delta)}$ & 1 &  $\frac{175}{4096}= 0.042724609375$ & $\frac{1}{128}=0.0078125 $ \\
   \hline
   $\frac{V(\Delta_{me} \cap \,\, \cdot \,)}{V(\cdot)}$ & $\approx0.00239$ &  $\approx0.056045$ & $\approx 0.13388$  \\
   \hline
\end{tabular}
\end{table}

\begin{proposition}
     For the hachimoji 3-parameter model, consider $\Delta$, $\Delta_+$, $\Delta_{dd}$, and $\Delta_{me}$ as subsets of $\R^3$ parameterized by $x,y,z$, the eigenvalues of a H3P Markov matrix. Then:
    \begin{enumerate*}
    \item[(i)] $V(\Delta)=\frac{4}{3}$;
    \item[(ii)] $V(\Delta_+)=\frac{7}{16}$;
    \item[(iii)] $V(\Delta_{dd})=\frac{1}{6}$;
     \item[(iv)] $V(\Delta_{me})=\frac{1}{3}$;
     \item[(v)] $V(\Delta_{me} \cap \Delta_{dd}) \approx 0.136733$.
    \end{enumerate*}
\end{proposition}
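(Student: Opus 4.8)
The plan is to use the affine bijection between the entries $(a,b,c,d)$ of an H3P Markov matrix~(\ref{K2P8}) and its eigenvalues. Inverting the relations $w=a+4b+c+2d$, $x=a-4b+c+2d$, $y=a+c-2d$, $z=a-c$ (equivalently applying $K^{-1}$) and setting $w=1$ gives
$$a=\tfrac{1+x}{8}+\tfrac{y}{4}+\tfrac{z}{2},\qquad b=\tfrac{1-x}{8},\qquad c=\tfrac{1+x}{8}+\tfrac{y}{4}-\tfrac{z}{2},\qquad d=\tfrac{1+x}{8}-\tfrac{y}{4}.$$
Since the Jacobian of the entries-to-eigenvalues map is constant, I work throughout in the $(x,y,z)$-coordinates, exactly as in the preceding proposition.

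First I would translate each defining condition into $(x,y,z)$-coordinates. The stochasticity conditions $a,b,c,d\ge 0$ defining $\Delta$ become $x\le 1$, $y\le \tfrac{1+x}{2}$ (from $d\ge 0$) and $|z|\le \tfrac{1+x}{4}+\tfrac{y}{2}$ (from $a,c\ge 0$); the set $\Delta_+$ adds $x,y,z>0$; diagonal dominance is equivalent to $a\ge\tfrac12$, i.e.\ $z\ge\tfrac{3-x-2y}{4}$, which cuts out $\Delta_{dd}$; and $\Delta_{me}$ is given by~(\ref{inequality_hachimoji_3-parameter}), namely $0<x\le1$, $y>0$, $z>0$, $y^2\le x$, $z^4\le xy^2$. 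A short check via weighted AM--GM, giving $x^{1/4}y^{1/2}\le\tfrac{1+x}{4}+\tfrac{y}{2}$ and $\sqrt x\le\tfrac{1+x}{2}$, confirms $\Delta_{me}\subseteq\Delta_+\subseteq\Delta$, so all five regions live in the same coordinate frame.

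For (i)--(iv) the regions admit elementary iterated integration (alternatively, the polytopes (i)--(iii) are computable in \texttt{Polymake}). For $\Delta$ I integrate $z$ over an interval of length $\tfrac{1+x}{2}+y$, then $y\in[-\tfrac{1+x}{2},\tfrac{1+x}{2}]$, then $x\in[-1,1]$, obtaining $\tfrac12\int_{-1}^1(1+x)^2\,dx=\tfrac43$; restricting to $x,y,z>0$ yields $V(\Delta_+)=\tfrac{7}{16}$; and imposing $z\ge\tfrac{3-x-2y}{4}$ produces a $z$-length $\tfrac{x+2y-1}{2}$ over the triangle $x+2y\ge1$, $y\le\tfrac{1+x}{2}$, $x\le 1$, integrating to $\int_0^1\tfrac{x^2}{2}\,dx=\tfrac16=V(\Delta_{dd})$. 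For $V(\Delta_{me})$ I integrate $z$ up to $(xy^2)^{1/4}=x^{1/4}y^{1/2}$, then $y$ up to $\sqrt x$, then $x\in(0,1]$, giving $\int_0^1\!\int_0^{\sqrt x}x^{1/4}y^{1/2}\,dy\,dx=\int_0^1\tfrac23 x\,dx=\tfrac13$.

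The main obstacle is (v). Here the integrand is $x^{1/4}y^{1/2}-\tfrac{3-x-2y}{4}$ over the planar region $D$ where the embeddability ceiling $z=x^{1/4}y^{1/2}$ lies above the diagonal-dominance floor $z=\tfrac{3-x-2y}{4}$, and its boundary $x^{1/4}y^{1/2}=\tfrac{3-x-2y}{4}$ is transcendental, so no elementary closed form is available. The substitution $x=s^4$, $y=t^2$ (with Jacobian $8s^3t$) turns this curve into the algebraic quartic $s^4+2t^2+4st-3=0$ and makes the integrand polynomial, but the resulting one-variable limit $t=-s+\sqrt{(3+2s^2-s^4)/2}$ still yields non-elementary integrals. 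I would therefore evaluate the integral numerically, obtaining $V(\Delta_{me}\cap\Delta_{dd})\approx 0.136733$.
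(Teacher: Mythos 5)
Your proposal is correct and follows essentially the same route as the paper: pass to eigenvalue coordinates via the inverse Fourier transform, rewrite the defining conditions of each set in terms of $(x,y,z)$, and integrate --- exactly for (i)--(iv) and numerically for (v). The only difference is that you carry out the iterated integrals by hand (and explicitly verify $\Delta_{me}\subseteq\Delta$ via AM--GM, a containment the paper leaves implicit), whereas the paper delegates the integration to \texttt{Mathematica}.
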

\begin{proof}
 The entries of a H3P Markov matrix as in~(\ref{K2P8}) can be expressed in terms of the eigenvalues as
$$
a=\frac{1+x+2y+4z}{8}, b=\frac{1-x}{8}, c=\frac{1+x+2y-4z}{8}, d=\frac{1+x-2y}{8}.
$$
Expressing all conditions defining $\Delta$, $\Delta_+$, $\Delta_{dd}$, and $\Delta_{me}$ in terms of $x,y,z$ allows us to use the \texttt{Integrate} command in \texttt{Mathematica} to compute the desired volumes. For $V(\Delta_{me} \cap \Delta_{dd})$ we used the numerical integration command \texttt{NIntegrate}.
\end{proof}

\begin{figure}
    \centering
      \includegraphics[width=0.6\textwidth]{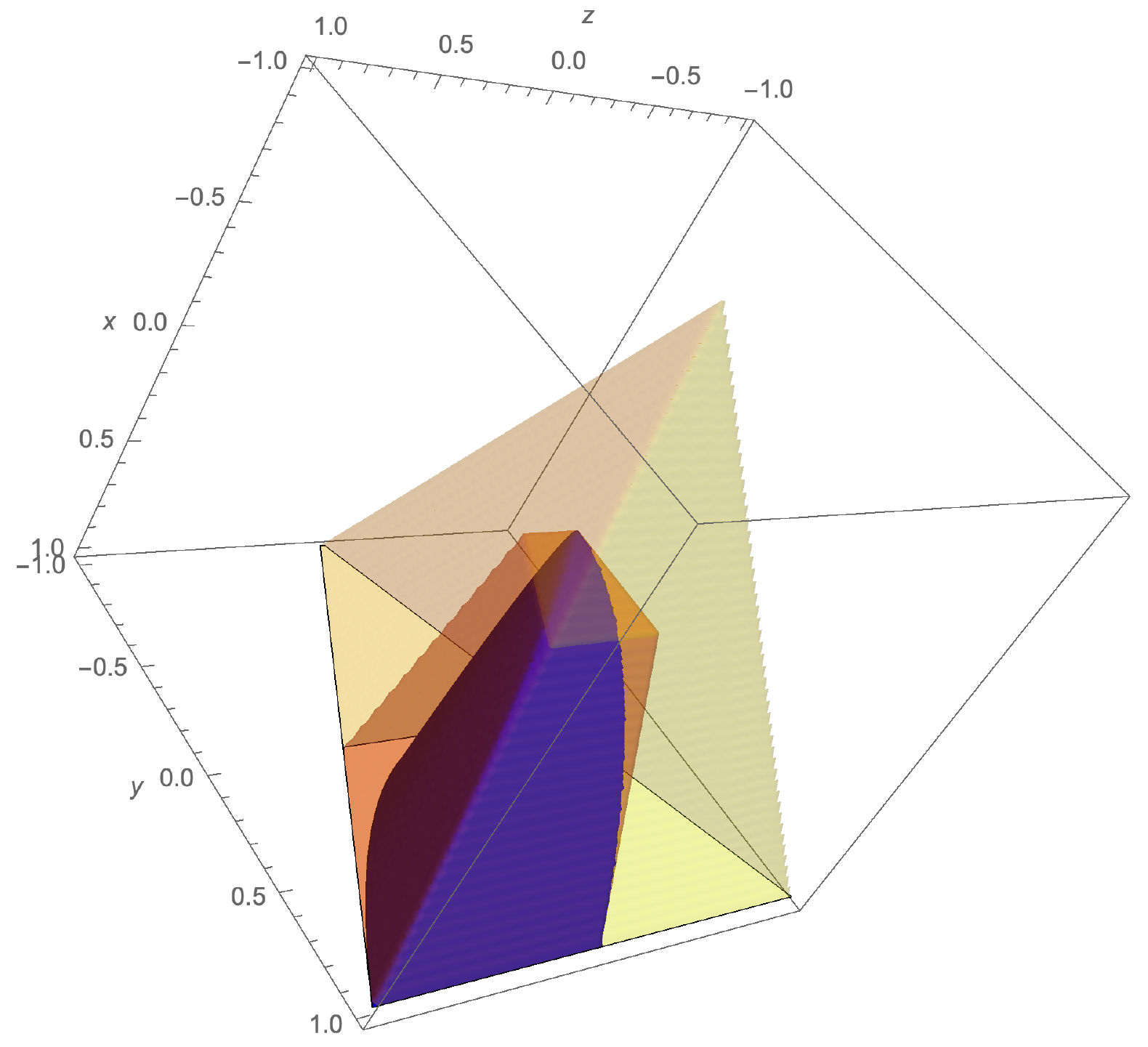}
    \caption{The sets $\Delta_{me}$, $\Delta_+$ and $\Delta$ for the hachimoji 3-parameter model. The sets $\Delta_+$ and $\Delta$ are polytopes; the set $\Delta_{me}$ is a semialgebraic set.}
    \label{fig:hachimoji3}
\end{figure}

The sets $\Delta_{me}$, $\Delta_+$ and $\Delta$ for the hachimoji 3-parameter model are depicted in Figure~\ref{fig:hachimoji3}. The relative volumes of relevant sets are given in Table~\ref{table:volumes-H3}.

\begin{table}[ht]
\centering
\caption{The relative volumes for the hachimoji 3-parameter model.}
\label{table:volumes-H3}
\begin{tabular}{|c|c|c|c|}
    \hline
     & $\Delta$& $\Delta_+$ &$\Delta_{dd}$ \\
     \hline
   $\frac{V(\cdot)}{V(\Delta)}$ & 1 & $\frac{21}{64} = 0.328125$ & $\frac{1}{8}=0.125$ \\
   \hline
   $\frac{V(\Delta_{me} \cap \,\, \cdot \,)}{V(\cdot)}$ & $\frac{1}{4}=0.25$ & $\frac{16}{21} \approx 0.76190$ & $\approx 0.82040$  \\
   \hline
\end{tabular}
\end{table}

Finally, we discuss the generalization of the Jukes-Cantor model which includes the hachimoji 1-parameter model. Let $\mathcal{G}$ be a finite abelian group of order $n$ and $L:\mathcal{G}\rightarrow \{0,1\}$ a labeling function such that and $L(0)=0$ and $L(g)=1$ for $g\neq 0$. In Lemma~\ref{lemma:general_jukes_cantor} we proved that $L$ is a $\mathcal{G}$-compatible labeling. In the general Jukes-Cantor model, the transition matrix   $P$ corresponding to this labeling is has the form
\[P_{ij}= \begin{cases}
      a, &  i=j \\
      b, & i\neq j
   \end{cases}
.\]
Since $P$ is a Markov matrix, then $a=1-(n-1)b$, and thus $P$ is parameterized by $b$.
\begin{proposition} \label{prop:volume-JC}
     For the general Jukes-Cantor model, consider $\Delta$, $\Delta_+$, $\Delta_{dd}$, $\Delta_{me}$ as subsets of $\R$ parameterized by $b$, the off-diagonal element of the Markov matrix. Then:
    \begin{enumerate*}
    \item[(i)] $\Delta=[0,\frac{1}{n-1}]$;
    \item[(ii)] $\Delta_+=[0,\frac{1}{n})$;
    \item[(iii)] $\Delta_{dd}=[0,\frac{1}{2(n-1)}]$;
     \item[(iv)] $\Delta_{me}=[0,\frac{1}{n})$;
     \item[(v)] $\Delta_{me} \cap \Delta_{dd}=[0,\frac{1}{2(n-1)}]$.
    \end{enumerate*}
\end{proposition}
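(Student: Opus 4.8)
The plan is to exploit the one-parameter structure of this model. Since $P$ has constant diagonal $a$ and constant off-diagonal $b$, and its row sums equal one, we immediately get $a=1-(n-1)b$, so all five sets are intervals in the single variable $b$. The entire argument then reduces to translating each defining condition into an inequality on $b$ and intersecting the resulting intervals with the constraint $b\geq 0$.

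First I would treat $\Delta$ itself. A matrix of this form is stochastic exactly when all entries are nonnegative, i.e. $b\geq 0$ and $a=1-(n-1)b\geq 0$; the latter gives $b\leq\frac{1}{n-1}$, yielding (i). Next I would record the eigenvalues, which govern the three remaining sets. Writing $P=(a-b)I+bJ$ with $J$ the all-ones matrix (whose eigenvalues are $n$ once and $0$ with multiplicity $n-1$), the eigenvalues of $P$ are $(a-b)+nb=a+(n-1)b=1$, with the all-ones vector as eigenvector, and $a-b=1-nb$ with multiplicity $n-1$; this matches $w=1$ and $x=1-nb$ from the H1P computation and Lemma~\ref{lemma:eigenpairs}.

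With the eigenvalues in hand the next three cases are direct. For $\Delta_+$ I impose $x>0$, i.e. $b<\frac{1}{n}$; since $\frac{1}{n}<\frac{1}{n-1}$ this already lies inside $\Delta$, giving (ii). For $\Delta_{dd}$ the diagonal-dominance condition $a\geq(n-1)b$ becomes $1-(n-1)b\geq(n-1)b$, i.e. $b\leq\frac{1}{2(n-1)}$, giving (iii). For $\Delta_{me}$ I invoke the embeddability characterization of Theorem~\ref{theorem:embeddable_matrices}, as specialized in Remark~\ref{remark:general-Jukes-Cantor-embeddability}, namely $w=1$ and $0<x\leq 1$. With $x=1-nb$, the inequality $x\leq 1$ reduces to $b\geq 0$ and $x>0$ to $b<\frac{1}{n}$, so $\Delta_{me}=[0,\frac{1}{n})$, giving (iv); I would note in passing that here $\Delta_{me}$ coincides with $\Delta_+$.

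The only step requiring a moment's care is the intersection in (v), where I compare the endpoints $\frac{1}{n}$ and $\frac{1}{2(n-1)}$. Since $n\geq 2$ gives $2(n-1)\geq n$, we have $\frac{1}{2(n-1)}\leq\frac{1}{n}$, so $\Delta_{me}\cap\Delta_{dd}=[0,\tfrac{1}{n})\cap[0,\tfrac{1}{2(n-1)}]=[0,\tfrac{1}{2(n-1)}]=\Delta_{dd}$, giving (v). There is no genuine obstacle in this proposition: the mathematical content is entirely in correctly identifying the repeated eigenvalue $1-nb$ and in this final endpoint comparison.
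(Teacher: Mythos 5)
Your proof is correct and follows essentially the same route as the paper: parameterize by $b$ with $a=1-(n-1)b$, identify the eigenvalues $1$ and $1-nb$ (the paper asserts these without your $P=(a-b)I+bJ$ derivation), translate each condition into an interval, and for (v) observe that $\Delta_{dd}\subseteq\Delta_{me}$ -- which the paper phrases as a set inclusion and you phrase as an endpoint comparison. The only (shared) blemish is the edge case $n=2$, where $\tfrac{1}{2(n-1)}=\tfrac{1}{n}$ so the intersection in (v) is actually the half-open interval $[0,\tfrac{1}{n})$; this affects the paper's own proof equally and is irrelevant for the hachimoji application with $n=8$.
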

\begin{proof}
The Markov matrix P has eigenvalues $1$ with multiplicity 1 and  $a-b=1-nb$ with multiplicity $n-1$. Hence

(i) $\Delta=\{b\in\mathbb{R}:a=1-(n-1)b \geq 0,b\geq 0\}= [0,\frac{1}{n-1}]$.

(ii) $\Delta_+=\{b\in\mathbb{R}:a=1-(n-1)b \geq 0,b\geq 0, 1-nb>0\}= [0,\frac{1}{n})$.

(iii) $\Delta_{dd}=\{b\in\mathbb{R}:a=1-(n-1)b \geq 0,b\geq 0, 1-(n-1)b\geq (n-1)b\}= [0,\frac{1}{2(n-1)}]$.

(iv) By Remark~\ref{remark:general-Jukes-Cantor-embeddability}, a Markov matrix is general Jukes-Cantor embeddable if and only if the eigenvalue $1-nb$ satisfies $1 \geq 1-nb >0$. Since $1 \geq 1-nb$ necessarily holds for any Markov matrix, we have $\Delta_{me}=\Delta_+$.

(v) Since $\Delta_{dd} \subseteq \Delta_+ = \Delta_{me}$, then $\Delta_{me} \cap \Delta_{dd}=\Delta_{dd}$.
\end{proof}

The relative volumes of relevant sets for the general Jukes-Cantor model are presented in Table~\ref{table:JC}.  Proposition~\ref{prop:volume-JC} gives for the hachimoji 1-parameter model \begin{enumerate*}
    \item[(i)] $\Delta=[0,\frac{1}{7}]$;
    \item[(ii)] $\Delta_+=[0,\frac{1}{8})$;
    \item[(iii)] $\Delta_{dd}=[0,\frac{1}{14}]$;
     \item[(iv)] $\Delta_{me}=[0,\frac{1}{8})$;
     \item[(v)] $\Delta_{me} \cap \Delta_{dd}=[0,\frac{1}{14}]$.
    \end{enumerate*}

\begin{table}[ht]
\centering
\caption{The relative volumes for the general Jukes-Cantor model.}
\label{table:JC}
\begin{tabular}{|c|c|c|c|}
    \hline
     & $\Delta$& $\Delta_+$ &$\Delta_{dd}$ \\
     \hline
     $\frac{V(\cdot)}{V(\Delta)}$ & 1& $\frac{n-1}{n}$ & $\frac{1}{2}$ \\
   \hline
   $\frac{V(\Delta_{me} \cap \,\, \cdot \,)}{V(\cdot)}$ & $\frac{n-1}{n}$&1&$1$ \\
   \hline
\end{tabular}
\end{table}

\section{Conclusion}
\label{section:Conclusion}

When modelling sequence evolution we often adopt several simplifying assumptions, which make the statistical problems tractable. The commonly used Markov models depend on the most common assumption that sites evolve independently following a Markov process. The Markov-chain is often assumed to be time-homogeneous, namely that substitution rates at any time are fixed and given by a rate matrix $Q$. Although in a globally time homogeneous process all branches have the same rate matrix, in the local time-homogeneous approach adopted in this paper we assume that each branch has a separate substitution rate. Violations of the local time-homogeneity is examined through non-embeddability of the models.

In this paper we provide necessary and sufficient conditions for model-embeddability of $n\times n$ symmetric group-based substitution models, which include the well known Cavender-Farris-Neyman, Jukes-Cantor, Kimura-2 and Kimura-3 parameter models for DNA. We fully characterize those embeddable  $n\times n$ stochastic matrices following a symmetric group-based model structure whose Markov generators also satisfy the constraints of the model, which we refer to as model embeddability.

A novel application of our main result is the characterization of model embeddability for three group-based models for the hachimoji DNA, a synthetic genetic system with eight building blocks. For these models we also compute the relevant volumes of model embeddable matrices within other relevant sets of Markov matrices. These computations show how restrictive is the hypothesis of a particular hachimoji time-continuous group-based model.

In this article we have considered symmetric group-based models. The importance of the symmetricity assumption is that it guarantees that the eigenvalues of rate and transition matrices of a group-based model are real. We use this property in the proof of Theorem~\ref{theorem:embeddable_matrices}. A future research question is to explore whether this approach can be extended to group-based models that are not symmetric.

\appendix
\section{Friendly labeling functions} \label{appendix:friendly-labelings}
Besides $\G$-compatible labeling functions, there is another class of labeling functions which has been studied in the literature. They are called \emph{friendly labeling functions} and were introduced by Sturmfels and Sullivant~\cite{SturmfelsSullivant}. Friendly labelings are useful in determining phylogenetic invariants for group-based models on evolutionary trees. In particular, a friendly labeling guarantees that if a particular labeling comes from an assignment of group elements, then any choice of a group element to one particular edge which is consistent with the labeling can be extended to an assignment that is consistent with labeling on all edges of the claw tree.

\begin{definition}
Let $\mathcal{G}$ be a finite abelian group and $L:\mathcal{G}\rightarrow\mathcal{L}$ a labeling function. Let $n\in \mathbb{N}$. Define $\tilde{L}:\mathcal{G}^n\rightarrow \mathcal{L}^n$ to be the induced labeling function on $\mathcal{G}^n$ and
$$Z=\{g\in \mathcal{G}^n: g_n=\sum_{i=1}^{n-1}g_i\}.$$
The labeling function $L$ is said to be $n$-\textit{friendly} if for every $l\in \tilde{L}(Z)$ and $i=1,2,\cdots,n$, we have
$\pi_i(\tilde{L}^{-1}(l))=L^{-1}(\pi_i(l))$.  Here, $\pi_i$ denotes the projection to the $i$-th component. Furthermore, the labeling function $L$ is said to be \textit{friendly} if it is $n$-friendly for all $n\geq 3.$
\end{definition}

By \cite[Lemma 11]{SturmfelsSullivant}, to check whether a labeling function is friendly, it is enough to check that the labeling is $3$-friendly.

\begin{example}[\cite{SturmfelsSullivant}, Example 9] \label{example_nonfriendly}
Let $\mathcal{G}=\mathbb{Z}_4$ and $L:\mathcal{G}\rightarrow\{0,1,2\}$ such that
$$L(0)=0, L(1)=1, L(2)=L(3)=2.$$
Then $L$ is not friendly labeling because $L^{-1}(\pi_3((1,1,2)))=\{2,3\}$ while $\pi_3(\tilde{L}^{-1}(1,1,2))=\pi_3((1,1,2))=\{2\}.$
\end{example}

Table~\ref{friendly_labeling} summarizes all friendly labelings for abelian groups of order $n$, where $2\leq n \leq 8$. In the table, two group elements receive the same label if they belong to the same subset in a partition of $\mathcal{G}$. In the table we do not include the friendly labelings for $\mathbb{Z}_2\times\mathbb{Z}_4$ and $\mathbb{Z}_2\times\mathbb{Z}_2\times\mathbb{Z}_2$, since there are too many of them.

    \begin{longtable}{|p{0.75cm}|p{2cm}|p{14cm}|}
    \hline
        n &  Group & Friendly labelings    \\
        \hline
         2 & $\mathbb{Z}_2$ & \{\{0,1\}\},\{\{0\},\{1\}\}\\
         \hline
         3 & $\mathbb{Z}_3$ &  \{\{0,1,2\}\},\{\{0\},\{1,2\}\},\{\{0\},\{1\},\{2\}\}\\
         \hline
         4 & $\mathbb{Z}_4$ &  \{\{0,1,2,3\}\},\{\{0\},\{1,2,3\}\},\{\{0,2\},\{1,3\}\},\{\{0\},\{1,3\},\{2\}\},\{\{0\},\{1\},\{2\},\{3\}\}\\
         \hline
         4&$\mathbb{Z}_2\times\mathbb{Z}_2$&\{\{(0,0),(0,1),(1,0),(1,1)\}\},\{\{(0,0)\},\{(0,1),(1,0),(1,1)\}\},
         \newline\{\{(0,0),(0,1)\},\{(1,0),(1,1)\}\},\{\{(0,0),(1,1)\},\{(0,1),(1,0)\}\},
         \newline\{\{(0,0),(1,0)\},\{(0,1),(1,1)\}\},\{\{(0,0)\},\{(0,1)\},\{(1,0),(1,1)\}\},
         \newline\{\{(0,0)\},\{(0,1),(1,0)\},\{(1,1)\}\},\{\{(0,0)\},\{(0,1),(1,1)\},\{(1,0)\}\},
         \newline\{\{(0,0)\},\{(0,1)\},\{(1,0)\},\{(1,1)\}\}\\\hline
         5 & $\mathbb{Z}_5$ &
        \{\{0,1,2,3,4\}\},\{\{0\},\{1,2,3,4\}\},\{\{0\},\{1,4\},\{2,3\}\},\{\{0\},\{1\},\{2\},\{3\},\{4\}\}\\
         \hline
         6&$\mathbb{Z}_2\times\mathbb{Z}_3$&
         \{\{(0,0),(0,1),(0,2),(1,0),(1,1),(1,2)\}\},\{\{(0,0)\},\{(0,1),(0,2),(1,0),(1,1),(1,2)\}\},
         \newline\{\{(0,0),(0,1),(0,2)\},\{(1,0),(1,1),(1,2)\}\},\{\{(0,0),(1,0)\},\{(0,1),(0,2),(1,1),(1,2)\}\},
         \newline\{\{(0,0)\},\{(0,1),(0,2)\},\{(1,0),(1,1),(1,2)\}\},\{\{(0,0)\},\{(0,1),(0,2),(1,1),(1,2)\},
         \newline\{(1,0)\}\},\{\{(0,0),(1,0)\},\{(0,1),(1,1)\},\{(0,2),(1,2)\}\},\newline\{\{(0,0)\},\{(0,1)\},\{(0,2)\},\{(1,0),(1,1),(1,2)\}\},
         \newline\{\{(0,0)\},\{(0,1),(0,2)\},\{(1,0)\},\{(1,1),(1,2)\}\},\{\{(0,0)\},\{(0,1),(1,1)\},\{(0,2),(1,2)\},\newline\{(1,0)\}\},
         \{\{(0,0)\},\{(0,1)\},\{(0,2)\},\{(1,0)\},\{(1,1)\},\{(1,2)\}\}\\\hline
         7 & $\mathbb{Z}_7$ &
          \{\{0,1,2,3,4,5,6\}\},\{\{0\},\{1,2,3,4,5,6\}\},\{\{0\},\{1,2,4\},\{3,5,6\}\},\newline\{\{0\},\{1,6\},\{2,5\},\{3,4\}\},\{\{0\},\{1\},\{2\},\{3\},\{4\},\{5\},\{6\}\}\\\hline
         8 & $\mathbb{Z}_8$&
         \{\{0,1,2,3,4,5,6,7\}\},\{\{0\},\{1,2,3,4,5,6,7\},\{\{0,4\},\{1,2,3,5,6,7\}\},
         \newline\{\{0,2,4,6\},\{1,3,5,7\}\},\{\{0\},\{1,2,3,5,6,7\},\{4\}\},\{\{0\},\{1,2,6,7\},\{3,4,5\}\},
         \newline\{\{0\},\{1,4,7\},\{2,3,5,6\}\},\{\{0\},\{1,3,5,7\},\{2,4,6\}\},\{\{0,4\},\{1,3,5,7\},\{2,6\}\},
         \newline\{\{0\},\{1,3,5,7\},\{2,6\},\{4\}\},\{\{0,4\},\{1,5\},\{2,6\},\{3,7\}\},\{\{0\},\{1,3,5,7\},\{2\},\{4\},\newline\{6\}\},\{\{0\},\{1,3\}\{2,6\},\{4\},\{5,7\}\},\{\{0\},\{1,7\},\{2,6\},\{3,5\},\{4\}\},\{\{0\},\{1,5\},\{2,6\},\newline\{3,7\},\{4\}\},\{\{0\},\{1,5\},\{2\},\{3,7\},\{4\},\{6\}\},\{\{0\},\{1\},\{2\},\{3\},\{4\},\{5\},\{6\},\{7\}\}\\\hline
             \caption{Friendly labelings for abelian group of order $n\leq 8$}
    \label{friendly_labeling}
    \end{longtable}

The next example shows that there are friendly labelings that are not $\G$-compatible.

\begin{example}
Let $\G$ be a finite abelian group and $L:\G\rightarrow\{0\}$ the labeling function defined by $L(g)=0$ for all $g \in \G$.  By Lemma~\ref{necessary_compatible_labeling}, the labeling function $L$ is not $\mathcal{G}$-compatible. However, it is a friendly labeling, since $\pi_i(\tilde{L}^{-1}((0,0,0)))=\pi_i(Z)=\G$ and  $L^{-1}(\pi_i((0,0,0)))=L^{-1}(0)=\G$.
\end{example}

Computations for abelian groups of order at most eight demonstrate that every symmetric $\G$-compatible labeling is a friendly labeling. It is left open, if the same is true for any finite abelian group.

\begin{question}
Given a finite abelian group $\mathcal{G}$, is the set of all symmetric $\mathcal{G}$-compatible labelings strictly contained in the set of all friendly labelings?
\end{question}

\smallskip

\smallskip

\noindent \textbf{Acknowledgements.}
An initial version of the main result of the present manuscript appeared in the preprint arXiv:1705.09228. Following the advice of referees, we divided the preprint into two manuscripts. The present manuscript builds on the section on the embedding problem in the earlier preprint. Most of the preprint arXiv:1705.09228 appeared in Bulletin of Mathematical Biology under the title ``Maximum Likelihood Estimation of Symmetric Group-Based Models via Numerical Algebraic Geometry''. Dimitra Kosta was partially supported by a Royal Society Dorothy Hodgkin Research Fellowship. Kaie Kubjas was partially supported by the Academy of Finland Grant 323416.

\bibliographystyle{plain}
\bibliography{main}

\smallskip

\smallskip

\noindent Authors' affiliations:

\vspace{0.2cm}
\noindent Muhammad Ardiyansyah, Department of Mathematics and Systems Analysis, Aalto University, \texttt{muhammad.ardiyansyah@aalto.fi}

\vspace{0.2cm}
\noindent Dimitra Kosta,
School of Mathematics, University of Edinburgh,\\
\texttt{D.Kosta@ed.ac.uk}

\vspace{0.2cm}
\noindent Kaie Kubjas, Department of Mathematics and Systems Analysis, Aalto University,\\ \texttt{kaie.kubjas@aalto.fi}

\end{document}